 \pgfplotsset{compat = newest}
\newcommand{\executeiffilenewer}[3]{%
\ifnum\pdfstrcmp{\pdffilemoddate{#1}}%
{\pdffilemoddate{#2}}>0%
{\immediate\write18{#3}}\fi%
}
\newcommand{%
\executeiffilenewer{.svg}{.pdf}%
{inkscape -z -D --file=.svg %
--export-pdf=.pdf --export-latex}%
\input{.pdf_tex}%
}[1]{%
\executeiffilenewer{#1.svg}{#1.pdf}%
{inkscape -z -D --file=#1.svg %
--export-pdf=#1.pdf --export-latex}%
\input{#1.pdf_tex}%
}
\theoremstyle{plain}
\newtheorem{proposition}{Proposition}%[chapter]
\newtheorem{corollary}{Corollary}%[chapter]
\newtheorem{lemma}{Lemma}
\newcounter{algocount}
\newcounter{examplecount}
\newenvironment{example}{\refstepcounter{examplecount}\begin{trivlist}\item \textbf{Example \theexamplecount.}}{\end{trivlist}}
\newenvironment{algorithm}[1][]{\refstepcounter{algocount}\setlength{\parindent}{0.5cm}\begin{trivlist}\item \textbf{Algorithm \thealgocount.}#1\\[-0.2cm]\rule{\columnwidth}{1pt}}{\\[-0.2cm]\rule{\columnwidth}{1pt}\end{trivlist}}
\newcommand{\veczero}{\boldsymbol{0}}
\newcommand{\vecc}{\boldsymbol{c}}
\newcommand{\vecp}{\boldsymbol{p}}
\newcommand{\vect}{\boldsymbol{t}}
\newcommand{\setl}{\ensuremath{\mathcal{L}}\xspace}
\newcommand{\bpm}{\begin{pmatrix}}
\newcommand{\epm}{\end{pmatrix}}
\newcommand{\bbm}{\begin{bmatrix}}
\newcommand{\ebm}{\end{bmatrix}}
\DeclareMathOperator*{\argmin}{argmin}
\DeclareMathOperator*{\argmax}{argmax}
\DeclareMathOperator{\miop}{\mathbb{I}}
\DeclareMathOperator{\kl}{\mathbb{D}}
\DeclareMathOperator*{\minimize}{minimize}
\DeclareMathOperator*{\st}{subject\;to}
\title{Optimal Quantization for Distribution Synthesis\todo{change title to ``Two algos for finite precision approx. of pmfs.''?}}
\author{Georg B\"ocherer and Bernhard C. Geiger
\thanks{The authors contributed equally to this work.}
\thanks{The work of Georg B\"ocherer was partly supported by the German Ministry of Education and Research in the framework of an Alexander von Humboldt Professorship. This paper was presented in part at the 9th International ITG Conference on Systems, Communications and Coding \cite{Boecherer_SCC13}.}
\thanks{Georg B\"ocherer is with the Institute for Communications Engineering, Technische Universit\"at M\"unchen. Email: \texttt{georg.boecherer@tum.de}}
\thanks{Bernhard C. Geiger was with the Signal Processing and Speech Communication Laboratory, Graz University of Technology. He is now with the Institute for Communications Engineering, Technische Universit\"at M\"unchen. Email: \texttt{geiger@ieee.org}}
}
\newcommand{\cra}[1]{\textcolor{black}{#1}}
\newcommand{\crb}[1]{\textcolor{black}{#1}}
\newcommand{\crr}[1]{\textcolor{black}{#1}}
\newcommand{\qtvd}{t^{\mathrm{vd}}}
\newcommand{\vecqtvd}{\boldsymbol{t}^{\mathrm{vd}}}
\newcommand{\qtid}{t^{\mathrm{id}}}
\newcommand{\vecqtid}{\boldsymbol{t}^{\mathrm{id}}}
\newcommand{\vecqt}{\hat{\boldsymbol{t}}}
\newcommand{\vecqtvdt}{\tilde{\vect}^{\mathrm{vd}}}
\newcommand{\qtvdt}{\tilde{t}^{\mathrm{vd}}}
\newcommand{\oleq}[1]{\overset{\text{(#1)}}{\leq}}
\newcommand{\oeq}[1]{\overset{\text{(#1)}}{=}}
\newcommand{\ogeq}[1]{\overset{\text{(#1)}}{\geq}}
\begin{document}

\maketitle

\begin{abstract}
Finite precision approximations of discrete probability distributions are considered, \crb{applicable for distribution synthesis, e.g., probabilistic shaping.} Two algorithms are presented that find the optimal $M$-type approximation $Q$ of a distribution $P$ in terms of the variational distance $\lVert Q-P\rVert_1$ and the informational divergence $D(Q\Vert P)$. Bounds on the approximation errors are derived and shown to be asymptotically tight. Several examples illustrate that the variational distance optimal approximation can be quite different from the informational divergence optimal approximation.
\end{abstract}

\begin{IEEEkeywords}
distribution synthesis, distribution quantization, $M$-type approximation, variational distance, informational divergence, Kullback-Leibler divergence.
\end{IEEEkeywords} 

\section{Introduction}

Probabilistic models are often used for information processing. In practice, such models are represented with finite precision, e.g., discrete probabilities are represented by rational numbers with finitely many digits. If each probability can be written as an integer multiple of $1/M$ for some integer $M$, then the resulting distribution is called an $M$-\emph{type distribution}. The integer $M$ characterizes the precision by which the rational distribution approximates the true distribution. Additionally, $M$ influences the space needed to store the rational distribution and the complexity to process it. This work studies approximating target distributions $\vect=(t_1,t_2,\dots)$ by $M$-type distributions. 

\subsection{Quality-of-Synthesis Criteria}
One way to measure how good $\vecqt$ approximates $\vect$ is the variational distance
\begin{align}
\lVert\vect-\hat{\vect}\rVert\cra{_1}=\sum_i |t_i-\hat{t}_i|\label{eq:def:vd}
\end{align}
which is symmetric in its arguments $\vect,\hat{\vect}$. Another criterion is the informational divergence
\begin{align}
\kl(\hat{\vect}\Vert\vect)=\sum_{i\colon \hat{t}_i>0} \hat{t}_i\log\frac{\hat{t}_i}{t_i}\label{qos}
\end{align}
where the expectation is taken w.r.t. the approximating distribution $\vecqt$.
The informational divergence with exchanged order of arguments is
\begin{align}
\kl(\vect\Vert\hat{\vect})=\sum_{i\colon t_i>0} t_i\log\frac{t_i}{\hat{t}_i}.\label{qos:dc}
\end{align}
Note that the expectation in \eqref{qos:dc} is taken with respect to the target distribution $\vect$. The informational divergence is asymmetric, i.e., \eqref{qos} and \eqref{qos:dc} are different in general.

In this work we are interested in the scenario where the approximating distribution $\vecqt$ \emph{synthesizes} the distribution $\vect$, i.e, we take expectation with respect to the approximating distribution $\vecqt$. We will therefore consider \eqref{eq:def:vd} and \eqref{qos} as quality-of-synthesis criteria. Several rationales for this choice are as follows:

\subsubsection{Empirical Probability} In distribution synthesis, the approximation $\vecqt$ is the ``true'' distribution and describes a random experiment where the random variable $I$ takes on the integer values $1,2,3,\dots$ according to $\vecqt$, i.e., $\Pr(I=i)=\hat{t}_i$. Denote by $i_1,i_2,\dotsc,i_m$ the outcomes of performing the random experiment $m$ times. \cra{By the law of large numbers,}
\begin{align}
\frac{\sum_{j=1}^m\log\frac{\hat{t}_{i_j}}{t_{i_j}}}{m}\approx\kl(\vecqt\Vert\vect).
\end{align}
There is no such interpretation for the measures\cra{~\eqref{eq:def:vd} and~\eqref{qos:dc}}.
\subsubsection{Infinite Support}
Many important probability distributions have infinite support, e.g., Poisson, Boltzmann, Borel, and Yule-Simon distributions. $M$-type distributions have finite support, and if the target distribution $\vect$ has infinite support, then the measure \eqref{qos:dc} is infinity. The measures\cra{~\eqref{eq:def:vd} and~\eqref{qos}} do not have this issue.
\subsubsection{Probabilistic Shaping\todo{special case of 1)}} Suppose the target distribution $\vect$ is the capacity-achieving input distribution of some communication channel and suppose further that $\vecqt$ is the actual input distribution generated by a communication system. Denote by $W$ the transition probability matrix of the channel. The mutual information $\miop(\vecqt,W)$ that results from using the approximation $\vecqt$ at the channel input is bounded as\todo{Why not directly maximize mutual info over $M$-type distributions, why only looking for the best approximation?}
\begin{align}
\crb{\mathsf{C}\ge}\miop(\vecqt,W)&\oeq{a}\mathsf{C}-\kl(\vecqt W\Vert \vect W)\notag\\
&\ogeq{b}\mathsf{C}-\kl(\vecqt\Vert \vect)\label{qos:motivation}
\end{align}
where $\vecqt W$ and $\vect W$ are the output distributions that result from the input distributions $\vecqt$ and $\vect$, respectively, and where $\mathsf{C}$ is the capacity of the channel. The equality in (a) follows by \crb{\cite[Sec.~III]{bocherer2011matching}},\cite[Proposition~3.11]{bocherer2012capacity} and (b) by the data processing inequality \cite[Lemma~3.11]{csiszar2011information}. The bound \eqref{qos:motivation} shows that as \eqref{qos} approaches zero, the mutual information $\miop(\vecqt,W)$ approaches capacity.

\subsection{Related Work}

For probabilistic shaping, Gallager suggested in \cite[p.~208]{gallager1968information} to choose $\vecqt$ as an $M$-type approximation of the capacity-achieving distribution $\vect$. Several works propose to use \emph{dyadic distributions} in Gallager's scheme, which are $M$-type distributions where $M$ is an integer power of two and where every probability can be written as $2^k/M$ for some integer $k$. The authors in \cite{raphaeli2004constellation} calculate a dyadic approximation by rounding the entries of $\vect$, which minimizes the variational distance \eqref{eq:def:vd}. The authors in \cite{yankov2014rate} calculate the dyadic approximation of $\vect$ that minimizes \eqref{qos} by Geometric Huffman Coding \cite{bocherer2011matching}. Gallager's scheme also works for $M$-type distributions that are not dyadic. In \cite{schreckenbach2005signal}, the authors calculate an $M$-type distribution by a sub-optimal algorithm that aims at minimizing \eqref{qos:dc}. In \cite{Boecherer_SCC13}, we proposed to use $M$-type distributions that minimize \eqref{qos} in Gallager's scheme.

The authors in \cite{Reznik_Quantization,Reznik_QuantizationSPIE} propose a quantization algorithm that minimizes the variational distance \eqref{eq:def:vd}, the Euclidean distance, and the $L_\infty$ norm. The authors also use a Taylor series approximation to analyze their algorithm in terms of the informational divergence \eqref{qos:dc} for \cra{$M$ significantly larger than the support size of the distribution}.

Resolution coding uses an $M$-type input distribution to approximate a target output distribution \cite{han1993approximation}. For the identity channel, \cite[Sec.~III.B]{han1993approximation} constructs an $M$-type approximation that is asymptotically optimal for the variational distance \eqref{eq:def:vd}. In \cite[Sec.~VI.A]{bocherer2013fixed}, informational divergence \eqref{qos} optimal $M$-type approximations are constructed. The authors in \cite{steinberg1996simulation} derive fundamental limits of resolution coding for the identity channel with respect to various approximation measures including \eqref{eq:def:vd} and a normalized version of \eqref{qos}. For noisy channels, resolution coding with respect to variational distance \eqref{eq:def:vd} is considered in \cite{han1993approximation}, informational divergence \eqref{qos} is considered in \cite{hou2013informational} and a normalized version of \eqref{qos} is considered in \cite{wyner1975common,han1993approximation}. Most of the work presented in \cite{wyner1975common,han1993approximation,steinberg1996simulation,hou2013informational} focuses on fundamental limits, i.e., the existence of asymptotically optimal $M$-type approximations is shown but no practical algorithms to construct them are provided.

\subsection{Contributions and Outline}
We propose two simple algorithms that find the $M$-type approximations $\vecqtid$ and $\vecqtvd$ minimizing the informational divergence \eqref{qos} and the variational distance \eqref{eq:def:vd}, respectively. We provide bounds on the approximation errors for target distributions with finite and countably infinite supports. The bounds are asymptotically tight, i.e., any target distribution can be approximated arbitrarily well by an $M$-type approximation with sufficiently large $M$. In Sec.~\ref{sec:discussion}, we show that variational distance \eqref{eq:def:vd} and informational divergence \eqref{qos} lead to fundamentally different $M$-type approximations. In particular, we provide an example where the variational distance optimal approximation $\vecqtvd$ results in an informational divergence equal to one for arbitrarily large $M$. Furthermore, we show that the informational divergence minimizing approximation $\vecqtid$ can have significantly smaller support size than the variational distance minimizing approximation $\vecqtvd$.

\section{Preliminaries}
\label{sec:preliminaries}
Let $\vect$ be a target probability distribution with a finite or countably infinite support. We denote by $n$ the support size of $\vect$. If the support is infinite, then $n=\infty$. Without loss of generality, we assume that $\vect$ is ordered so that $t_1\geq t_2\geq\cdots$. We define the complement of the cumulative distribution function as
\begin{align}
T_k:=\sum_{i>k}t_i\label{eq:Tdef}.
\end{align}
Let $M$ be a positive integer. A distribution $\vecp$ is $M$-type, if each entry can be written as $p_i=c_i/M$ for some non-negative integer $c_i \le M$. We want to determine the $M$-type distribution $\vecp$ that best approximates the target distribution $\vect$. Two quality measures for approximation are considered, namely, the informational divergence and the variational distance as defined in \eqref{qos} and \eqref{eq:def:vd}, respectively. Pinsker's inequality \cite[Lem.~11.6.1]{cover2006elements} bounds the informational divergence from below in terms of the variational distance:
 \begin{equation}
  \lVert \vecp-\vect\rVert_1 \le \sqrt{2 \kl(\vecp\Vert\vect) }.\label{eq:pinsker}
 \end{equation}
There have been several works on bounding the informational divergence from above in terms of the variational distance; see~\cite{Sason_ReversePinsker} for a recent improvement and an overview over available bounds. The most useful for our purposes is adapted from~\cite{Verdu_ITABound}:
\begin{lemma}[{\cite[Thm.~7]{Verdu_ITABound}}]\label{lem:IDboundVD}
For two probability distributions $\vecp$ and $\vect$,
\begin{align}
\kl(\vecp\Vert \vect)\leq \frac{1}{2}\frac{r\log r}{r-1}\lVert \vecp-\vect\rVert_1  \label{eq:pinskerreverse}
\end{align}
where $\displaystyle r:=\sup_{i\colon p_i>0}\frac{p_i}{t_i}\ge 1$.
\end{lemma}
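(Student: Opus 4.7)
Setting $y_i:=p_i/t_i\in[0,r]$, I would rewrite $\kl(\vecp\Vert\vect)=\sum_i t_i\,y_i\log y_i$. The crucial observation is that $\sum_i t_i(y_i-1)=\sum_i(p_i-t_i)=0$, so for any real constant $c$ a multiple of $(y_i-1)$ can be subtracted termwise at no cost:
\begin{align*}
\kl(\vecp\Vert\vect) = \sum_i t_i\,h(y_i), \qquad h(y):=y\log y - c(y-1).
\end{align*}
I would take $c:=(r\log r)/(r-1)$, the slope of the secant of the convex map $y\mapsto y\log y$ joining $(1,0)$ and $(r,r\log r)$. With this choice, convexity gives $h\leq 0$ on $[1,r]$ (the graph lies below the secant), while on $[0,1]$ the nonpositive term $y\log y$ can be discarded to yield the one-sided bound $h(y)\leq c(1-y)$.

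Splitting the indices into $A=\{i:p_i\geq t_i\}$ and $B=\{i:p_i<t_i\}$ and applying these two pieces,
\begin{align*}
\kl(\vecp\Vert\vect) = \sum_{i\in A} t_i\,h(y_i) + \sum_{i\in B} t_i\,h(y_i) \leq 0 + c\sum_{i\in B}(t_i-p_i).
\end{align*}
Because $\vecp$ and $\vect$ are both probability distributions, the positive and negative excursions of $\vecp-\vect$ have equal mass, so $\sum_{i\in B}(t_i-p_i)=\tfrac{1}{2}\lVert\vecp-\vect\rVert_1$, which delivers the stated inequality.

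The main obstacle I expect is that no pointwise bound of the form $y\log y\leq \tfrac{c}{2}|y-1|$ is available on all of $[0,r]$ (it fails at $y=r$ once $r$ is large enough, e.g.\ $r=e^2$), so an index-by-index argument cannot succeed. The cancellation has to be orchestrated globally: one subtracts a linear functional that vanishes in $t_i$-expectation and picks its slope precisely so that every index with $y_i\geq 1$ contributes nonpositively, after which only the small-ratio indices survive and their total mass is exactly half the variational distance. The degenerate case $r=\infty$ (i.e.\ $\vecp\not\ll\vect$) leaves both sides of the inequality infinite and is handled trivially.
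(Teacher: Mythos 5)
Your proof is correct. Note that the paper itself offers no proof of Lemma~\ref{lem:IDboundVD}: it is imported verbatim from the cited reference, so there is nothing internal to compare against. Your argument is sound and self-contained: the identity $\sum_i t_i(y_i-1)=0$ licenses subtracting the secant $c(y-1)$ with $c=\frac{r\log r}{r-1}$, convexity of $y\log y$ kills every index with $y_i\in[1,r]$ (and the definition of $r$ guarantees $y_i\le r$ whenever $p_i>0$, while $p_i=0\ge t_i$ forces $t_i=0$ so that term vanishes), the crude bound $y\log y\le 0$ handles $[0,1)$, and the mass-balance identity $\sum_{i:p_i<t_i}(t_i-p_i)=\tfrac12\lVert\vecp-\vect\rVert_1$ produces the factor $\tfrac12$. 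Your remark that a pointwise bound $y\log y\le \tfrac{c}{2}|y-1|$ fails at $y=r$ correctly identifies why the global cancellation is indispensable. This is essentially the secant-line argument used in the cited source, so nothing further is needed; one could only quibble that the edge cases $r=1$ (which forces $\vecp=\vect$) and $r=\infty$ deserve the one-line dispositions you already give them.
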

\cra{In Lemma~\ref{lem:IDboundVD} and throughout the remainder of this work, $\log$ denotes the natural logarithm.}

Note that the upper bound \eqref{eq:pinskerreverse} depends on the distributions not only via the variational distance $\lVert \vecp-\vect\rVert_1$, but also via $r$. We therefore call \eqref{eq:pinskerreverse} \emph{distribution dependent}. Any reverse Pinsker's inequality must be distribution dependent, see \cite[Sec.~I.A]{berend2014minimum}. Note further that Lemma~\ref{lem:IDboundVD} was refined in~\cite[Thm.~1]{Sason_ReversePinsker}.

% In the proof of \cite[Lemma 3.10]{even-dar2007value}, the authors state the bound
% \begin{align}
% \kl(\vecp\Vert \vect)\leq \frac{1}{2}\left(\log \frac{1}{\mu}+1\right)\lVert \vecp-\vect\rVert_1\label{eq:evendar}
% \end{align}
% where $\mu:=\inf_i t_i$. We have
% \begin{align}
% r = \sup_{i\colon p_i>0}\frac{p_i}{t_i}\oleq{a} \sup_{i\colon p_i>0}\frac{1}{t_i}\leq\sup_{i}\frac{1}{t_i}=\frac{1}{\mu}.
% \end{align}
% If the support size of $\vecp$ is larger than one, then we have strict inequality in (a), which shows that \eqref{eq:pinskerreverse} strictly improves upon \eqref{eq:evendar} for non-trivial distributions $\vecp$. Furthermore, 
% suppose $\vect$ has infinite support and $\vecp$ has finite support. Then $r$ is finite and the bound \eqref{eq:pinskerreverse} on $\kl(\vecqt\Vert\vect)$ is finite while the bound \eqref{eq:evendar} is infinite.

\section{Variational Distance Optimal Quantization}
\label{sec:vd}

\begin{figure}
\begin{algorithm}[ Variational distance optimal approximation.]\label{alg:vd} 
 \\
\cra{Initialize $\vecqtvd=\mathbf{0}$\\
Compute $\qtvd_i\leftarrow\frac{\lfloor Mt_i\rfloor}{M}$, $i=1,\dotsc,\min\{n,M\}$.\\
Compute $e_i\leftarrow t_i-\qtvd_i$, $i=1,\dotsc,\min\{n,M\}$.\\
Compute $L\leftarrow M - M\cdot\sum_{i=1}^{\min\{n,M\}} \qtvd_i$.}\\
\textbf{repeat} $L$ times\\
\indent Choose $\displaystyle j=\min \{\argmax_i e_i\}$. //\emph{choose the smallest index first.} \\
\indent Update $\qtvd_j\leftarrow \qtvd_j+\frac{1}{M}$.\\
\indent Update $e_j\leftarrow t_j-\qtvd_j$.\\
\textbf{end repeat}\\
Return $\vecqtvd$.
\end{algorithm}
\end{figure}

\subsection{Algorithm 1}

An $M$-type approximation of a target distribution $\vect$ can be calculated as follows. First, round off the entries of $\vect$ and then distribute the remaining mass among the entries with the largest error. We call this method \emph{Algorithm 1}, see the top of Page~\pageref{alg:vd}. 

Formally, we first calculate the pre-approximation
\begin{equation}
 \tilde{t}^\text{vd}_i=\frac{\lfloor Mt_i\rfloor}{M},\quad i=1,\dots,n.\label{eq:pre-approximation}
\end{equation}
\cra{Note that in Algorithm~\ref{alg:vd} we can restrict this computation to the first $\min\{n,M\}$ indices of $\vect$ since, by assumption, $\vect$ is ordered, and since not more than $M$ masses can be distributed. Thus, if $n>M$, we can be sure that $\qtvd_i=0$ for $i>M$.}

In general, the entries of $\tilde{\vect}^\text{vd}$ do not sum to one. The pre-approximation gives rise to the non-negative errors
\begin{align}
e_i:=t_i-\tilde{t}^\text{vd}_i\ge 0,\quad i=1,\dotsc,n\label{eq:def:error}
\end{align}
which sum to the \emph{rest mass}
\begin{align}
\sum_{i=1}^n e_i =\frac{L}{M}\label{eq:restmass}
\end{align}
for some integer $L$. Note that the rest mass is bounded as $0\leq L\leq M$, and it is equal to zero if and only if the target distribution $\vect$ is itself $M$-type. 
\begin{example}
For the \cra{2-type} target distribution $\vect=(\frac{1}{2},\frac{1}{2})$ and $M=2$, we have $\tilde{\vect}^\text{vd}=\vect$ and rest mass $0$, i.e., $L=0$. For the \cra{3-type} target distribution $\vect=(\frac{1}{3},\frac{1}{3},\frac{1}{3})$ and $M=2$, we have $\tilde{\vect}^\text{vd}=(0,0,0)$ and rest mass $1$, i.e., $L=M$.
\end{example}
Let $\mathcal{L}$ be \cra{a} set of the indices with the $|\mathcal{L}|=L$ largest error terms, i.e., we have
\begin{align}
i\in\setl,\,j\notin\setl\Rightarrow e_i\geq e_j.\label{eq:setl}
\end{align}
We distribute the remaining $L$ unit masses to the indices in $\setl$, i.e., we choose
\begin{align}
\qtvd_i = \begin{cases}
\tilde{t}_i^\text{vd}+\frac{1}{M},&i\in\setl\\
\tilde{t}_i^\text{vd},& \text{otherwise}.\label{eq:adhoc}
\end{cases}
\end{align}
Note that the set $\mathcal{L}$ is not unique, and consequently the approximation $\vecqtvd$ is not unique either. We illustrate this by an example.
\begin{example}
Let $\vect=(\frac{3}{4},\frac{1}{4})$ and suppose $M=2$. Then
\begin{align}
\tilde{t}^\text{vd}_1=\frac{1}{2},\quad \tilde{t}^\text{vd}_2=0
\end{align}
and
\begin{align}
e_1=e_2=\frac{1}{4}.
\end{align}
Thus, either $\mathcal{L}=\{1\}$ or $\mathcal{L}=\{2\}$. The corresponding approximations $\vecqtvd=(1,0)$ and $\vecqtvd=(1/2,1/2)$ both lead to the same approximation error, namely $\lVert \vecqtvd-\vect\rVert_1=\frac{1}{2}$.
\end{example}
Algorithm 1 resolves this ambiguity by taking entries with lower indices first. From now on, $\vecqtvd$ denotes the unique $M$-type approximation of $\vect$ that is calculated by Algorithm~\ref{alg:vd}. 
\subsection{Elementwise Properties}
From \eqref{eq:def:error} and \eqref{eq:adhoc}, we see that for each index $i$, we have
\begin{equation}
  |t_i-\qtvd_i| < \frac{1}{M}\label{eq:quantBounds}
\end{equation}
and $\vecqtvd$ is a \emph{uniform approximation} of $\vect$. Also by \eqref{eq:def:error} and \eqref{eq:adhoc}, it follows that the approximation $\vecqtvd$ assigns no mass to entries of $\vect$ that are equal to zero, i.e., we have
\begin{equation}
 t_i=0\quad\Rightarrow\quad \qtvd_i = 0.\label{eq:probZero}
\end{equation}
Furthermore, if $\vecqtvd$ assigns zero mass to some entry $t_i$, then it also assigns zero mass to all entries smaller than $t_i$:
\begin{lemma}\label{lem:firstM_vd}
 $t_j<t_i$ and $\qtvd_i=0$ $\Rightarrow$ $\qtvd_j=0$.
\end{lemma}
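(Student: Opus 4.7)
The plan is to unpack what $\qtvd_i = 0$ forces on the pre-approximation and the ``distribution of leftover mass'' stage of Algorithm~1, and then exploit the fact that a strictly smaller $t_j$ produces a strictly smaller error $e_j$.

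First I would observe that, by the construction \eqref{eq:adhoc}, $\qtvd_i$ is either $\qtvdt_i$ or $\qtvdt_i+1/M$, both of which are nonnegative. Hence $\qtvd_i = 0$ forces two things simultaneously: $\qtvdt_i = \lfloor Mt_i\rfloor / M = 0$ (equivalently $t_i < 1/M$), and $i \notin \setl$. The first condition gives $e_i = t_i - \qtvdt_i = t_i$.

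Next I would look at index $j$. Since $t_j < t_i < 1/M$, the pre-approximation yields $\qtvdt_j = 0$ as well, and $e_j = t_j$. Therefore $e_j < e_i$ strictly.

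The final step is to rule out $j \in \setl$. The defining property \eqref{eq:setl} of $\setl$ states that indices outside $\setl$ have errors no larger than those inside $\setl$; contrapositively, if some index $i \notin \setl$ has a strictly larger error than $j$, then $j \notin \setl$. Since we just showed $e_j < e_i$ with $i \notin \setl$, we conclude $j \notin \setl$, and hence $\qtvd_j = \qtvdt_j = 0$. I do not anticipate any real obstacle here; the only subtle point is to note that the tie-breaking rule used by the algorithm (smallest index first) is irrelevant for this lemma because the inequality $e_j < e_i$ is strict.
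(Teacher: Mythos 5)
Your proof is correct and follows essentially the same route as the paper's: both arguments hinge on observing that $\qtvd_i=0$ forces $e_i=t_i$, that $t_j<t_i<1/M$ forces $e_j=t_j<e_i$, and that Algorithm~1 only adds the leftover unit masses to indices with the largest errors. The paper phrases this as a contradiction while you give a direct (and slightly more careful) version, explicitly noting that $\qtvdt_j=0$ and that the strict inequality $e_j<e_i$ makes the tie-breaking rule irrelevant.
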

\begin{proof}
Assume $t_j<t_i$. \crr{In the pre-approximation step, Algorithm~\ref{alg:vd} ensures that $\qtvd_i \ge \frac{\lfloor Mt_i \rfloor}{M}$, hence $\qtvd_i=0$ implies $1/M>t_i>t_j$. Thus, the errors after pre-approximation satisfy $e_i=t_i$, $e_j=t_j$, and $e_j<e_i$. Algorithm~\ref{alg:vd} can only assign a remaining unit mass to $t_j$ and not to $t_i$ if $e_j\geq e_i$. Whence, $\qtvd_j=0$.}
\end{proof}
% \begin{proof}
% Assume $t_j<t_i$ and $\qtvd_i=0$ and suppose further that $\qtvd_j>0$. Since $\qtvd_i=0$, we have $e_i=t_i$. Because of \eqref{eq:def:error}, we have $t_j >e_j$. Algorithm~\ref{alg:vd} assigns mass only to $t_j$ but not to $t_i$ if $e_j\geq e_i$. Thus, we have $t_j> e_j\geq e_i=t_i$, which contradicts our assumption that $t_j$ is strictly smaller than $t_i$.
% \end{proof}
To prove the optimality of Algorithm~\ref{alg:vd}, we make use of the following lemma.
\begin{lemma}\label{lem:vdopt_quantBounds}
Let $\vect$ be a target distribution with finite or countably infinite support and let $M$ be a positive integer. Every $M$-type approximation $\vecp$ of $\vect$ that is optimal w.r.t. the variational distance satisfies \eqref{eq:quantBounds}.
\end{lemma}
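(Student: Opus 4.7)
The plan is a proof by contradiction, using a local exchange argument. Assume $\vecp$ is an optimal $M$-type approximation but some index $i$ violates \eqref{eq:quantBounds}, so $|t_i - p_i| \geq 1/M$. By symmetry, treat the two cases $p_i \geq t_i + 1/M$ and $p_i \leq t_i - 1/M$ separately; I will present the first (the second is analogous with the roles of surplus and deficit swapped).

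First, because $\sum_k p_k = \sum_k t_k = 1$ and $p_i - t_i \geq 1/M > 0$, summing the signed errors forces the existence of some index $j \neq i$ with $p_j < t_j$. I would then construct a candidate $\vecp'$ that differs from $\vecp$ only at two coordinates by setting $p'_i = p_i - 1/M$ and $p'_j = p_j + 1/M$, leaving all other entries unchanged. This $\vecp'$ is still an $M$-type probability distribution: the entries remain non-negative integer multiples of $1/M$ (for $i$, since $p_i \geq t_i + 1/M \geq 1/M$; for $j$, trivially), and the sum is preserved.

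Next, I would compute the change in variational distance term by term. At index $i$, the assumption $p_i - t_i \geq 1/M$ ensures $p'_i - t_i = p_i - t_i - 1/M \geq 0$, so $|p'_i - t_i| - |p_i - t_i| = -1/M$. At index $j$, I split into two subcases depending on whether $p_j + 1/M$ stays below $t_j$ or crosses it. In the first subcase the contribution from $j$ also decreases by exactly $1/M$, yielding a total decrease of $2/M$. In the second subcase, a short calculation shows the change at $j$ equals $2(p_j - t_j) + 1/M$, and combined with the $-1/M$ from index $i$ the total change is $2(p_j - t_j) < 0$. Either way, $\|\vecp' - \vect\|_1 < \|\vecp - \vect\|_1$, contradicting optimality of $\vecp$.

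The case $p_i \leq t_i - 1/M$ proceeds identically: an overshoot index $j$ (with $p_j > t_j$) must exist, one shifts $1/M$ of mass from $p_j$ to $p_i$, and an analogous two-subcase computation again yields a strict decrease. The main technical point is the subcase where the compensating index $j$ straddles $t_j$ after the shift, since there the naive $-2/M$ bound fails; the saving grace is that the slack $2(p_j - t_j)$ (resp. $2(t_j - p_j)$) is itself strictly negative, which is exactly enough to preserve the strict inequality. No convergence issues arise for countably infinite support because only two coordinates are modified.
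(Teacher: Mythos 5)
Your proof is correct and follows essentially the same route as the paper's: a contradiction via a local exchange that moves $1/M$ of mass between the violating index and a compensating index of opposite sign, with the same two-subcase analysis (total change $-2/M$ when the compensating index stays on the same side of $t_j$, and $2(p_j-t_j)<0$ when it crosses). The paper merely presents the deficit case first and parametrizes the subcases by the integer offset $m$ rather than by whether $t_j$ is crossed, but the argument is the same.
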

\begin{IEEEproof}
 See Section~\ref{proof:vd:optprop}.
\end{IEEEproof}

\subsection{Optimality of Algorithm 1 and Performance Bounds}
\begin{proposition}\label{prop:vdopt} Let $\vect$ be an ordered target distribution with finite or countably infinite support and let $M$ be a positive integer. Among all $M$-type distributions $\vecp$, $\vecp=\vecqtvd$ minimizes $\lVert\vecp-\vect\rVert_1$.
\end{proposition}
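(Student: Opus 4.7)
The plan is to combine Lemma~\ref{lem:vdopt_quantBounds} with a one-line greedy optimization. By Lemma~\ref{lem:vdopt_quantBounds}, every variational-distance optimal $M$-type approximation $\vecp$ satisfies $|t_i-p_i|<1/M$ for all $i$. Since $p_i$ is a non-negative integer multiple of $1/M$ and $\tilde t_i^{\mathrm{vd}}=\lfloor Mt_i\rfloor/M$, this elementwise bound forces $p_i\in\{\tilde t_i^{\mathrm{vd}},\,\tilde t_i^{\mathrm{vd}}+1/M\}$. Consequently, any candidate optimal $\vecp$ is completely determined by the subset $\setl$ of indices on which the ``round up'' option is taken.

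Next I would exploit the normalization $\sum_i p_i=\sum_i t_i=1$ together with \eqref{eq:restmass} to conclude that $|\setl|=L$. Splitting the variational distance over $\setl$ and its complement and using $e_i=t_i-\tilde t_i^{\mathrm{vd}}\in[0,1/M)$ yields the key identity $\lVert\vecp-\vect\rVert_1=\sum_{i\in\setl}(1/M-e_i)+\sum_{i\notin\setl}e_i=2L/M-2\sum_{i\in\setl}e_i$. Hence minimizing $\lVert\vecp-\vect\rVert_1$ over admissible $\vecp$ reduces to choosing $\setl$ of cardinality $L$ that maximizes $\sum_{i\in\setl}e_i$, and the greedy rule of selecting the $L$ largest $e_i$ is manifestly optimal. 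This is precisely the selection \eqref{eq:setl}--\eqref{eq:adhoc} performed by Algorithm~\ref{alg:vd}, so $\vecqtvd$ attains the minimum.

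The main subtleties I anticipate concern the countably-infinite-support case. First, the $\argmax$ step inside the loop of Algorithm~\ref{alg:vd} must be well-defined at every iteration: this follows because $0\leq e_i\leq t_i$ and $t_i\to 0$ (since $\vect$ is summable), so the supremum of the $e_i$ is attained by some index, and Algorithm~\ref{alg:vd} breaks ties by taking the smallest such index. Second, the set $\setl$ need not be unique when ties in the $e_i$ occur, but any choice of $L$ indices achieving the largest error terms gives the same variational distance, so the proposition asserts only that $\vecqtvd$ is a (not necessarily unique) minimizer. The bulk of the technical work is therefore absorbed into Lemma~\ref{lem:vdopt_quantBounds}; once that elementwise bound is available, the optimality of the greedy selection is essentially a direct combinatorial consequence, and no further estimation is required.
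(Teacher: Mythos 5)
Your proposal is correct and follows essentially the same route as the paper: invoke Lemma~\ref{lem:vdopt_quantBounds} to restrict any optimal $\vecp$ to the form $\tilde t_i^{\mathrm{vd}}$ or $\tilde t_i^{\mathrm{vd}}+1/M$ with exactly $L$ round-ups, write the variational distance as $\sum_{i\in\setl}(1/M-e_i)+\sum_{i\notin\setl}e_i$, and observe that the greedy choice of the $L$ largest errors minimizes it. Your extra remarks on the well-definedness of the $\argmax$ in the countably infinite case and on tie-breaking are sound refinements of the same argument.
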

\begin{IEEEproof}
According to Lemma~\ref{lem:vdopt_quantBounds}, any optimal approximation satisfies~\eqref{eq:quantBounds}. Hence, any optimal approximation $\vecp^*$ can be written as
\begin{align}
p^*_i = \begin{cases}
\tilde{t}_i^\text{vd}+\frac{1}{M},&i\in\setl'\\
\tilde{t}_i^\text{vd},& \text{otherwise}
\end{cases}
\end{align}
where $\tilde{\vect}^\text{vd}$ is the pre-approximation \eqref{eq:pre-approximation} and where $\setl'$ is some set of indices with $|\setl'|=L$, where $L$ is given by \eqref{eq:restmass}. We have
\begin{equation}
\lVert\vecp^*-\vect\rVert_1 = \sum_{i\in\setl'} \left(\frac{1}{M}-e_i\right) + \sum_{i\notin\setl'} e_i\label{eq:vdresidual}
\end{equation}
where the error terms $e_i$ are defined in \eqref{eq:def:error}. The residual \eqref{eq:vdresidual} is minimized if $\setl'$ consists of the indices of the $L$ largest error terms $e_i$. According to \eqref{eq:setl}, the approximation calculated by Algorithm~1 has this property.
\end{IEEEproof}
We next bound the variational distance in terms of $M$. If the target distribution $\vect$ has finite support of cardinality $n$, then
\begin{align}
\lVert\vecqtvd-\vect\rVert_1&=\sum_{i=1}^n|\qtvd_i-t_i|\notag\\
&\oleq{a}\sum_{i=1}^n\frac{1}{M}\notag\\
&=\frac{n}{M}\label{eq:vd:simplebound}
\end{align}
where (a) follows by \eqref{eq:quantBounds}. For $n=\infty$, the bound \eqref{eq:vd:simplebound} is infinity for any finite $M$. Thus, we need a different approach to derive a useful bound for the case of infinite support. The next lemma lets us tighten bound \eqref{eq:vd:simplebound} if $M\geq n$ and it will also lead to a useful bound for $n=\infty$. The underlying observation is that we can apply Algorithm~\ref{alg:vd} also to a \emph{sub-probability} distribution, i.e., a target vector whose entries are positive and sum to a value less than or equal to one.
\begin{lemma}\label{lem:subprobability}
 Let $\vect$ be an ordered sub-probability distribution with $k\le M$ entries and total mass $1-T_k$, and let $M$ be a positive integer. Then we have
\begin{align}
 \lVert\vecqtvd-\vect\rVert_1 &
\begin{cases}
 \leq\frac{k}{2M}+\frac{MT_k^2}{2k}, & \text{always}\\
 =T_k, &\text{ if }  T_k\ge\frac{k}{M}                               
\end{cases}\label{eq:subbounds:cases}\\
&\le \frac{k}{2M}+T_k.\label{eq:subbounds:loose}
\end{align}
\end{lemma}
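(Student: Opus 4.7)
My plan is to run Algorithm 1 on the sub-probability $\vect$ and track both the per-index pre-approximation error $e_i \in [0, 1/M)$ and the number of increments each index receives. The starting point is the identity $L = MT_k + M\sum_{i=1}^k e_i$, obtained from $L/M = 1 - \sum_i \tilde{t}_i^{\mathrm{vd}} = 1 - (1 - T_k) + \sum_i e_i$. Since each $e_i < 1/M$, this places $L$ in the window $MT_k \le L < MT_k + k$, and in particular $T_k \ge k/M$ forces $L \ge k$.

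The crucial structural claim I will need is: \emph{if $L \ge k$, then every index is incremented at least once}. I will prove this by contradiction: suppose some index $j$ is never picked, and let $t'$ be the first iteration where some $i \neq j$ is picked for the second time. Just before iteration $t'$, we have $e_j^{(t'-1)} = e_j \ge 0$ while $e_i^{(t'-1)} = e_i - 1/M < 0$ since $e_i < 1/M$. Hence $e_i^{(t'-1)} < e_j^{(t'-1)}$ strictly, so $i$ cannot be the $\argmax$ at iteration $t'$, contradicting that $i$ is picked. The tiebreak rule does not intervene because the inequality is strict. This gives the claim, and immediately yields the equality case of \eqref{eq:subbounds:cases}: when $T_k \ge k/M$, every index is incremented, so $\qtvd_i \ge \tilde{t}_i^{\mathrm{vd}} + 1/M > t_i$ for all $i$, and $\lVert \vecqtvd - \vect \rVert_1 = \sum_i(\qtvd_i - t_i) = 1 - (1 - T_k) = T_k$.

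For the first bound in \eqref{eq:subbounds:cases} I will split on whether $L \ge k$ or $L < k$. When $L \ge k$, the previous paragraph gives $\lVert \vecqtvd - \vect \rVert_1 = T_k$, and AM-GM applied to $k/(2M)$ and $MT_k^2/(2k)$ (whose product is $T_k^2/4$) gives $T_k \le k/(2M) + MT_k^2/(2k)$. When $L < k$ no index is picked twice, so writing $S_1 = \sum_{i \in \setl} e_i$ and $S_2 = \sum_{i \notin \setl} e_i$ and using $S_1 + S_2 = L/M - T_k$, a short computation gives $\lVert \vecqtvd - \vect \rVert_1 = L/M - S_1 + S_2 = T_k + 2S_2$. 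Since $\setl$ collects the $L$ largest errors, the average over $\setl$ is at least the average over the complement, yielding $S_2 \le (k-L)(L/M - T_k)/k$. Substituting gives a quadratic bound in $L$ whose unconstrained maximum, attained at $L^\ast = (k + MT_k)/2$, evaluates to exactly $k/(2M) + MT_k^2/(2k)$. The looser bound \eqref{eq:subbounds:loose} is then immediate: in the case $T_k \ge k/M$ we have $T_k \le k/(2M) + T_k$, and in the case $T_k < k/M$ we have $MT_k^2/(2k) < T_k/2 \le T_k$, so in both cases the right-hand side of \eqref{eq:subbounds:cases} is dominated by $k/(2M) + T_k$.

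The main obstacle is the structural claim that every index gets incremented once $L \ge k$; everything else is routine algebra and AM-GM. The delicate points there are the strictness of the inequality $e_i < 1/M$ (which prevents a second pick of $i$ from tying with an unpicked $j$) and ensuring the argument is unaffected by the smallest-index tiebreak.
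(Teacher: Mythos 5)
Your proof is correct and follows essentially the same route as the paper's: the same case split between the regime where every index receives a unit mass (giving the exact value $T_k$) and the regime of at most one increment per index (giving the mean-inequality bound, maximized at $L^\ast=(k+MT_k)/2$), together with the same algebraic comparison of $T_k$ against $\frac{k}{2M}+\frac{MT_k^2}{2k}$. Your explicit structural claim about the pick dynamics (no run of the algorithm can leave one index unpicked while picking another twice) is in fact a more careful justification of the step where the paper simply invokes \eqref{eq:quantBounds}, which strictly speaking was derived only for the single-increment regime and can fail for sub-probability inputs with $T_k\ge k/M$.
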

Note that for $T_k=k/M$ both cases in \eqref{eq:subbounds:cases} coincide.
\begin{IEEEproof}
The proof is given in Sec.~\ref{sec:proof:lem}.
\end{IEEEproof}
\cra{A distribution can be split into two sub-probability distributions, one containing the first $k$ indices, and one containing the tail of the distribution. More specifically, we can split $\vect$ into two vectors $\vect_{1:k}$ and $\vect_{\mathrm{tail}}$ with the same length but disjoint support sets: The entries of $\vect_{1:k}:=(t_1,\dots,t_k,0,0,0,\dots)$ are zero for indices larger than $k$, while for $\vect_{\mathrm{tail}}:=(0,0,\dots,t_{k+1},\dots,t_n)$ the first $k$ entries are zero. Let $\vecqtvd_{1:k}$ denote the approximation that results from applying Algorithm~\ref{alg:vd} to $\vect_{1:k}$}. We have
\begin{equation}
  \lVert\vecqtvd_{1:k}-\vect\rVert_1 = \lVert\vecqtvd_{1:k}-\vect_{1:k}\rVert_1 + T_k\label{eq:divide_and_conquer}
\end{equation}
where $\lVert\vecqtvd_{1:k}-\vect_{1:k}\rVert_1$ can be bounded by Lemma~\ref{lem:subprobability}. This divide-and-conquer approach is useful when the number of entries of the target distribution exceeds the type $M$ of the approximating distribution. Approach \eqref{eq:divide_and_conquer} is also used in the proof of the following proposition, which states various bounds on the approximation error of $\vecqtvd$.
\begin{proposition}\label{prop:vdbounds}%
Let $\vect$ be an ordered target distribution and let $M$ be a positive integer. 
\begin{enumerate}
\item If $\vect$ has finite support of cardinality $n\leq M$, then
\begin{align}
\lVert\vecqtvd-\vect\rVert_1\leq \frac{n}{2M}.\label{eq:vd:mgen}
\end{align}
\item If $\vect$ has finite or countably infinite support of cardinality $n> M$, then
\begin{align}
\lVert\vecqtvd-\vect\rVert_1\leq \frac{k}{2M}\left( 1+\frac{MT_k}{k}\right)^2\le \frac{2k}{M}\label{eq:vd:mlen}
\end{align}
 where $k$ is the support size of $\vecqtvd$.
\item For $n=\infty$, the support size $k$ of $\vecqtvd$ satisfies $k\stackrel{M\to\infty}{\longrightarrow} \infty$ and $k/M\stackrel{M\to\infty}{\longrightarrow} 0$.
\end{enumerate}
\end{proposition}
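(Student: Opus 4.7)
The plan is to treat the three parts separately. Part~1 is essentially immediate: when $n \leq M$, applying Lemma~\ref{lem:subprobability} with $k = n$ and tail mass $T_n = 0$ directly yields $\lVert \vecqtvd - \vect \rVert_1 \leq n/(2M)$.

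For Part~2 the strategy is divide-and-conquer combined with the optimality guaranteed by Proposition~\ref{prop:vdopt}. Let $k$ denote the support size of $\vecqtvd$; since each nonzero entry of an $M$-type probability distribution is at least $1/M$, we have $k \leq M$, and Lemma~\ref{lem:firstM_vd} ensures the support is exactly $\{1,\dots,k\}$. Consider the alternative $M$-type approximation $\vecqtvd_{1:k}$ obtained by applying Algorithm~\ref{alg:vd} to the sub-probability vector $\vect_{1:k}$. Lemma~\ref{lem:subprobability} combined with \eqref{eq:divide_and_conquer} yields
\begin{equation*}
\lVert \vecqtvd_{1:k} - \vect \rVert_1 \leq \frac{k}{2M} + \frac{MT_k^2}{2k} + T_k = \frac{k}{2M}\left(1 + \frac{MT_k}{k}\right)^2,
\end{equation*}
and the optimality of $\vecqtvd$ transfers this bound to $\lVert \vecqtvd - \vect \rVert_1$, giving the first inequality of \eqref{eq:vd:mlen}. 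For the second inequality, I would sum \eqref{eq:quantBounds} over $i \leq k$ using $\sum_{i \leq k} \qtvd_i = 1$ to derive $T_k \leq k/M$, which forces $(1 + MT_k/k)^2 \leq 4$.

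For Part~3 the two limits require separate arguments. That $k \to \infty$ is straightforward: for any fixed $N$, once $M \geq 1/t_N$, the monotonicity of $\vect$ gives $\lfloor Mt_i \rfloor \geq 1$ for all $i \leq N$, forcing $k \geq N$. Showing $k/M \to 0$ is the technical heart. My plan is to introduce $n_M := |\{i : t_i \geq 1/M\}|$ and first establish $n_M/M \to 0$ by truncation: given $\epsilon > 0$, choose $N$ with $T_N < \epsilon/2$; each $i > N$ with $t_i \geq 1/M$ contributes at least $1/M$ to the tail, so there are fewer than $M\epsilon/2$ of them, while $N/M < \epsilon/2$ for large $M$. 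Next, every support element of $\vecqtvd$ either has nonzero pre-approximation (hence lies in $\{1,\dots,n_M\}$) or belongs to $\mathcal{L}$, giving $k \leq n_M + L$. Bounding $L \leq n_M + MT_{n_M}$ via $\sum_{i \leq n_M}\lfloor Mt_i\rfloor \geq M(1 - T_{n_M}) - n_M$ then yields $k/M \leq 2n_M/M + T_{n_M}$, and both terms vanish as $M\to\infty$ (the second because $n_M \to \infty$ forces $T_{n_M} \to 0$).

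The main obstacle is precisely the rate bound $k/M \to 0$ in Part~3: Parts~1 and~2 are essentially bookkeeping around Lemmas~\ref{lem:subprobability} and~\ref{lem:firstM_vd} plus optimality, whereas $k/M \to 0$ requires coupling the combinatorics of how Algorithm~\ref{alg:vd} populates $\mathcal{L}$ with the tail decay of $\vect$ via the auxiliary statistic $n_M$.
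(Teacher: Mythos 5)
Your Parts 1 and 2, and the $k\to\infty$ claim in Part 3, follow the paper's route: Part 1 is exactly Lemma~\ref{lem:subprobability} with $k=n$ and $T_n=0$; Part 2 is the decomposition \eqref{eq:divide_and_conquer} plus Lemma~\ref{lem:subprobability}, with $T_k< k/M$ obtained by summing \eqref{eq:quantBounds} over the support. Your one small variation in Part 2 is to run Algorithm~\ref{alg:vd} on $\vect_{1:k}$ as a competitor and invoke the optimality of $\vecqtvd$ (Proposition~\ref{prop:vdopt}) to transfer the bound, whereas the paper applies Lemma~\ref{lem:subprobability} directly to the restriction of $\vecqtvd$ to its support; your version sidesteps having to argue that this restriction coincides with the output of Algorithm~\ref{alg:vd} on $\vect_{1:k}$, which is a modest gain in cleanliness. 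The genuine divergence is the proof that $k/M\to 0$. The paper argues by contradiction: assuming $\limsup_M k(M)/M=c>0$, it extracts a subsequence satisfying \eqref{eq:subseqbounds}, bounds the support growth between two far-apart members $M_i\ll M_j$ from above and below, and forces $c<4\epsilon$ for every $\epsilon$. You instead introduce $n_M=|\{i\colon t_i\ge 1/M\}|$, prove $n_M/M\to 0$ by truncating at $N$ with $T_N<\epsilon/2$ and counting tail indices with $t_i\ge 1/M$, and then bound the support by $k\le n_M+L$ (every support index has nonzero pre-approximation or lies in $\setl$) together with $L=M-\sum_{i\le n_M}\lfloor Mt_i\rfloor< MT_{n_M}+n_M$, giving the explicit estimate $k/M< 2n_M/M+T_{n_M}\to 0$. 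All steps check out. Your argument is direct rather than by contradiction, shorter, and quantitatively more informative (it gives an actual rate in terms of $n_M$ and $T_{n_M}$), while the paper's subsequence argument buys nothing extra here; I see no gap.
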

\begin{IEEEproof}
The proof is given in Sec.~\ref{sec:proof:vd}.
\end{IEEEproof}
We next give examples that illustrate the tightness of the bounds.
\begin{example}
For $n<\infty$, the bound \eqref{eq:vd:mgen} is tight for a uniform target distribution and $M=3n/2$. For $M<n$, the bound \eqref{eq:vd:mlen} is tight for, e.g., $M=5$ and $t_1=t_2=t_3=4/15$ and $t_i<1/15$ for all $i>3$ ($n$ arbitrary).
\end{example}
\subsection{Asymptotic Optimality}
For target vectors with finitely many entries, the bound \eqref{eq:vd:mgen} guarantees that the approximation error of $\vecqtvd$ can be made arbitrarily small by choosing $M$ large enough. The same is true for infinitely many entries. This follows by bound \eqref{eq:vd:mlen} together with Statement 3) of Proposition~\ref{prop:vdbounds}. Furthermore, by \eqref{eq:quantBounds} the $M$-type approximation converges uniformly to the target distribution. We summarize these observations as a corollary to Proposition~\ref{prop:vdbounds}.
\begin{corollary}\label{cor:vd:asymptotic}
Let $\vect$ be an ordered target distribution with finite or countably infinite support. For $M\to\infty$, the approximation $\vecqtvd$ converges uniformly to the target distribution $\vect$.
\end{corollary}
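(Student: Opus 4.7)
The plan is to combine the two ingredients that the paragraph preceding the corollary already puts on the table: the elementwise bound \eqref{eq:quantBounds} and the variational-distance bounds from Proposition~\ref{prop:vdbounds}. Uniform elementwise convergence, $\sup_i |t_i - \qtvd_i| \to 0$, follows instantly from \eqref{eq:quantBounds}, since that inequality gives $|t_i - \qtvd_i| < 1/M$ for every index $i$ simultaneously, with no dependence on whether the support is finite or infinite. Taking the supremum over $i$ and letting $M\to\infty$ therefore disposes of the uniform convergence claim in one line.

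For the (stronger) $\ell^1$ claim that the preceding paragraph emphasizes, I would split into two cases driven by the support size $n$. If $n<\infty$, then for all $M\geq n$ the bound \eqref{eq:vd:mgen} yields $\lVert\vecqtvd - \vect\rVert_1 \le n/(2M)$, which tends to zero as $M\to\infty$. If $n=\infty$, then for every positive integer $M$ the hypothesis $n>M$ of Statement~2) of Proposition~\ref{prop:vdbounds} is satisfied, so the looser form of \eqref{eq:vd:mlen} gives $\lVert\vecqtvd - \vect\rVert_1 \le 2k/M$, where $k$ is the support size of $\vecqtvd$. Invoking Statement~3) of Proposition~\ref{prop:vdbounds}, which provides $k/M\to 0$ as $M\to\infty$, concludes this case.

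The main obstacle is really the countably infinite support case, since the naive bound \eqref{eq:vd:simplebound} is vacuous there; but that obstacle has already been resolved inside Proposition~\ref{prop:vdbounds} by Statement~3), so the corollary reduces to a short bookkeeping argument. In particular, no ordering-sensitive manipulation of $\vect$ is needed here, because all ingredients invoked (the elementwise bound \eqref{eq:quantBounds}, and Statements~1)--3) of Proposition~\ref{prop:vdbounds}) are already stated in terms of the ordered target distribution fed to Algorithm~\ref{alg:vd}.
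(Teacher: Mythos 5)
Your proposal is correct and matches the paper's own (informal) justification essentially verbatim: uniform convergence from the elementwise bound \eqref{eq:quantBounds}, and $\ell^1$ convergence from \eqref{eq:vd:mgen} for finite support and from \eqref{eq:vd:mlen} combined with Statement~3) of Proposition~\ref{prop:vdbounds} ($k/M\to 0$) for infinite support. No gaps.
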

For $M\ge n$ the variational distance decreases with $\mathcal{O}(1/M)$. For $M<n$ no such convergence guarantee can be given. This is illustrated in the next example.
\begin{example}\label{ex:vd:yulesimon}
Consider the Yule-Simon distribution~\cite{simon1955class} with $t_i=\rho B(i,\rho+1)$, where $\rho>0$ and where $B(\cdot,\cdot)$ is the beta-function. Lemma~\ref{lem:firstM_vd} ensures that Algorithm~\ref{alg:vd} assigns unit masses to at most the first $M$ indices. For $M>1$, we have
\begin{align}
\lVert\vecqtvd-\vect\rVert_1 &= \sum_{i=1}^\infty |\qtvd_i-t_i| \ge T_M\notag\\
&=M B(M,\rho+1)\label{eq:yulesimonccdf}\\
&\geq \frac{K(\rho)}{(M+\rho+1)^\rho}\label{eq:yulesimonbound}
\end{align}
where $K(\rho)$ is a positive constant that does not depend on $M$, see Sec.~\ref{proof:yulesimonbound} for the derivation. Thus, the convergence of Algorithm~\ref{alg:vd} is at best $\mathcal{O}(1/M^\rho)$.
\end{example}

\section{Informational Divergence Optimal Quantization}
\label{sec:id}
\begin{figure}
\begin{algorithm}[ Informational divergence optimal quantization.]\label{alg:id}\ 
\\
Initialize $c_i\leftarrow 0$, $i=1,\dotsc,n$.\\
\textbf{for} $m=1,2,\dotsc,M$\\
\indent Choose $\displaystyle j=\min\{\argmin_i \Delta_i(c_i+1)\}$. //\emph{choose the smallest index first.}\\
\indent Update $c_j\leftarrow c_j+1$.\\
\textbf{end for}\\
Return $\vecc$.
\end{algorithm}
\end{figure}
We now consider $M$-type quantization with respect to the informational divergence, i.e., we want to solve the problem
\begin{align}
\begin{split}
\minimize_{\vecp}\quad&\kl(\vecp\Vert\vect)\\
\st\quad&\vecp\text{ is $M$-type}.
\end{split}\label{prob:KLDproblem}
\end{align}

\subsection{Equivalent Problem}
Recall that each entry $p_i$ of an $M$-type distribution can be written as $p_i=c_i/M$ for some non-negative integer $c_i$. We have
\begin{align}
\kl(\vecp\Vert\vect)&=\sum_{i\colon c_i>0} \frac{c_i}{M}\log\frac{\frac{c_i}{M}}{t_i}\nonumber\\
&=\frac{1}{M}\Bigl(\sum_{i\colon c_i>0} c_i\log\frac{c_i}{t_i}\Bigr)-\log M
\end{align}
so that Problem~\eqref{prob:KLDproblem} is equivalent to
\begin{align}
\begin{split}
\minimize_{c_1,\dotsc,c_n}\quad&\sum_{i\colon c_i>0} c_i\log\frac{c_i}{t_i}\\
\st\quad&c_i\in\{0,1,2,\dotsc,M\},\quad i=1,\dotsc,n\\
&\sum_{i=1}^n c_i = M.
\end{split}\label{prob:mod}
\end{align}
If $\vecc^*$ is a solution of Problem~\eqref{prob:mod}, then $\vecp^*=\vecc^*/M$ is a solution of Problem \eqref{prob:KLDproblem}.

\subsection{Algorithm~\ref{alg:id}}
To solve problem~\eqref{prob:mod}, we write the objective function as a telescoping sum
\begin{align}
\sum_{i\colon c_i>0} c_i\log\frac{c_i}{t_i}&=\sum_{i=1}^{n}\sum_{k=1}^{c_i}\Bigl[k\log\frac{k}{t_i}-(k-1)\log\frac{k-1}{t_i}\Bigr]\nonumber\\
&=\sum_{i=1}^{n}\sum_{k=1}^{c_i}\Delta_i(k)
\end{align}
where the increment function is
\begin{align}
\Delta_i(k)=k\log k-(k-1)\log(k-1)+\log\frac{1}{t_i}.\label{eq:id:increment}
\end{align}
\crb{Evaluating $\Delta_i(x)$ as a function of a real number $x$ and taking the derivative,
\begin{align}
\frac{\partial}{\partial x}\Delta_i(x)=\log\frac{x}{x-1},\label{eq:id:derivative}
\end{align}
we conclude that $\Delta_i(k)$ is strictly monotonically increasing in $k$. Moreover, rewriting~\eqref{eq:id:increment} as
\begin{align}
\Delta_i(k)&=k\log \frac{k}{k-1}+\log(k-1)+\log\frac{1}{t_i}\notag\\
&\geq\log(k-1).\label{eq:id:incrementbound}
\end{align}
(which holds trivially for $k=1$) shows that the increment function grows without bound with $k$. The following lemma summarizes the properties of the increment function.
\begin{lemma}\label{lem:ifproperties}
For all $m>0$, the increment function $\Delta_i(k)$ grows without bound with $k$ and satisfies
\begin{align}
&\ell>m\Rightarrow\Delta_i(\ell)>\Delta_i(m)\label{eq:id:lkproperty}\\
&t_i>t_j\Rightarrow\Delta_i(m)<\Delta_j(m).\label{eq:id:ijproperty}
\end{align}
\end{lemma}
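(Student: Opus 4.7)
The plan is to dispatch the two claims separately, with most of the work already done in the paragraph immediately preceding the lemma.

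For \eqref{eq:id:lkproperty}, I would rely on the derivative computation \eqref{eq:id:derivative}, which the text has already carried out: viewing $\Delta_i$ as a function of a real variable $x$, one has $\partial_x\Delta_i(x) = \log\frac{x}{x-1}>0$ for all $x>1$. This means $\Delta_i(\cdot)$ is strictly increasing on $(1,\infty)$, and by continuity at $x=1$ (using the convention $0\log 0 = 0$, so that $\Delta_i(1) = \log(1/t_i)$) also on $[1,\infty)$. Since $\ell,m$ are positive integers with $\ell>m\ge 1$, the inequality $\Delta_i(\ell)>\Delta_i(m)$ follows. I would briefly note that strict positivity of the derivative for $x>1$ comes from $\frac{x}{x-1}>1$.

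For \eqref{eq:id:ijproperty}, the proof is a single line: subtracting the two values of the increment function, the $k$-dependent terms cancel and one is left with
\begin{align}
\Delta_i(m)-\Delta_j(m) = \log\frac{1}{t_i}-\log\frac{1}{t_j} = \log\frac{t_j}{t_i}.
\end{align}
If $t_i>t_j$, then $t_j/t_i<1$ and hence the right-hand side is strictly negative, which gives $\Delta_i(m)<\Delta_j(m)$.

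There is no real obstacle here; the only subtlety is the boundary behaviour at $k=1$, which is handled by the convention $0\log 0=0$ already implicit in the definition \eqref{eq:id:increment}. Both parts together amount to a short verification, and the lemma is essentially a reformulation of the monotonicity observation made just before its statement.
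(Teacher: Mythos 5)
Your proof is correct and follows exactly the route the paper takes: the paper establishes \eqref{eq:id:lkproperty} via the derivative computation \eqref{eq:id:derivative} in the paragraph preceding the lemma, and \eqref{eq:id:ijproperty} is immediate from the decomposition of $\Delta_i(k)$ into a $k$-dependent part plus the offset $\log(1/t_i)$, as you observe. Your explicit handling of the boundary case $k=1$ is a small but welcome addition.
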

}

% Evaluating $\Delta_i(x)$ as a function of a real number $x$ and taking the derivative, we get
% \begin{align}
% \frac{\partial}{\partial x}\Delta_i(x)=\log\frac{x}{x-1}.\label{eq:id:derivative}
% \end{align}
% The derivative \eqref{eq:id:derivative} does not depend on $i$ and is strictly positive. We conclude that $\Delta_i(k)$ is strictly monotonically increasing in $k$. The following lemma summarizes the properties of the increment function.
% \begin{lemma}\label{lem:ifproperties}
% For all $m>0$, the increment function $\Delta_i(k)$ satisfies
% \begin{align}
% &\ell>m\Rightarrow\Delta_i(\ell)>\Delta_i(m)\label{eq:id:lkproperty}\\
% &t_i>t_j\Rightarrow\Delta_i(m)<\Delta_j(m).\label{eq:id:ijproperty}
% \end{align}
% \end{lemma}
% For $k>1$, we can rewrite the increment function \eqref{eq:id:increment} as
% \begin{align}
% \Delta_i(k)&=k\log \frac{k}{k-1}+\log(k-1)+\log\frac{1}{t_i}\notag\\
% &\geq\log(k-1).\label{eq:id:incrementbound}
% \end{align}
% The lower bound \eqref{eq:id:incrementbound} shows that the increment function grows without bound with $k$. 

An allocation $\vecc$ can be obtained by initially assigning the zero vector $\veczero$ to a pre-allocation $\tilde{\vecc}$ and successively incrementing the entry of $\tilde{\vecc}$ by one for which the corresponding increment cost $\Delta(\tilde{c}_i+1)$ is smallest. After $M$ iterations, the constraint $\sum_i \tilde{c}_i=M$ is fulfilled and $\vecc=\tilde{\vecc}$ is a valid allocation. If more than one entry of $\tilde{\vecc}$ has the smallest increment cost in some step, then either of them can be chosen, so the allocation obtained by this strategy is not unique. We illustrate this by the following example.
\begin{example}\label{ex:id2sols}
Suppose $\vect=(\frac{4}{5},\frac{1}{5})$ and $M=2$. We have $\Delta_1(1)=\log\frac{5}{4}$ and $\Delta_2(1)=\log 5$, so after the first step, $\tilde{\vecc}=(1,0)$. In the second step, we have
\begin{align}
\Delta_1(2)=2\log(2)+\log\frac{5}{4}=\log 5,\quad \Delta_2(1)=\log 5,
\end{align}
so the final allocation is either $\vecc_1=(2,0)$ or $\vecc_2=(1,1)$. The corresponding approximations are $\vecp_1=(1,0)$ and $\vecp_2=(\frac{1}{2},\frac{1}{2})$. Both approximations lead to the same informational divergence, namely
\begin{align}
\kl(\vecp_1\Vert \vect)=\kl(\vecp_2\Vert\vect)=\log\frac{5}{4}.
\end{align}
\end{example}
Algorithm~\ref{alg:id} resolves this ambiguity by incrementing entries with lower index first. From now on, we denote by $\vecqtid$ the unique $M$-type approximation of $\vect$ that is calculated by Algorithm~\ref{alg:id}.

\subsection{Elementwise Properties}

The informational divergence is a weighted sum of $\log\frac{\qtid_i}{t_i}$. We therefore expect that for a good approximation $\vecqtid$, the ratio $\qtid_i/t_i$ is close to one. The next lemma states this property.
\begin{lemma}\label{lem:ratiobound}
Let $\vect$ be a target distribution with finite or countably infinite support and let $M$ be a positive integer. Every $M$-type approximation $\vecp$ of $\vect$ that is optimal w.r.t. the informational divergence satisfies
 \begin{align}
\frac{p_i}{t_i} < \frac{e}{t_1},\quad \forall i\leq k\label{eq:id:ratiobound}
\end{align}
where $k$ is the support size of $\vecp$. In particular
\begin{align}
\frac{1}{Mt_k}\le \frac{e}{t_1}.\label{eq:id:ratioboundk}
\end{align}
\end{lemma}

\begin{IEEEproof}
See Section~\ref{sec:proof:ratiobound}.
\end{IEEEproof}

Lemma~\ref{lem:ratiobound} directly implies
\begin{align}
t_i=0\Rightarrow\qtid_i=0.\label{eq:id:zeromass}
\end{align}
Furthermore, if $\vecqtid$ assigns zero mass to some entry $t_i$, then it also assigns zero mass to all entries smaller than $t_i$:
\begin{lemma}\label{lem:firstM_id}
 $t_j<t_i$ and $\qtid_i=0$ $\Rightarrow$ $\qtid_j=0$.
\end{lemma}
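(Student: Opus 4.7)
The plan is to prove this by contradiction, mirroring the strategy of Lemma~\ref{lem:firstM_vd} but using the increment-function machinery of Algorithm~\ref{alg:id} in place of the error terms $e_i$.

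Assume $t_j < t_i$ and $\qtid_i=0$, and suppose for contradiction that $\qtid_j>0$. Since $\qtid_j>0$, there is at least one iteration $m^\star\in\{1,\dotsc,M\}$ of the outer \textbf{for} loop in Algorithm~\ref{alg:id} at which index $j$ is selected and $c_j$ is incremented for the first time. Just before this iteration, $c_j=0$, so the cost incurred by selecting $j$ is $\Delta_j(c_j+1)=\Delta_j(1)$. At the same time, because $\qtid_i=0$, the counter $c_i$ is never incremented throughout the algorithm; in particular, $c_i=0$ at iteration $m^\star$, and the hypothetical cost of selecting $i$ instead would be $\Delta_i(1)$.

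Now I invoke property \eqref{eq:id:ijproperty} from Lemma~\ref{lem:ifproperties}: since $t_i>t_j$, we have $\Delta_i(1)<\Delta_j(1)$ strictly. Hence at iteration $m^\star$, index $i$ attains a strictly smaller increment cost than index $j$, so $j\notin\argmin_\ell \Delta_\ell(c_\ell+1)$ at that step. This contradicts the fact that Algorithm~\ref{alg:id} chose $j$, and therefore the assumption $\qtid_j>0$ must be false.

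I do not anticipate any real obstacle here, since the strict monotonicity in \eqref{eq:id:ijproperty} immediately rules out $j$ being selected whenever $i$ is available with $c_i=0$, making the tie-breaking rule (smallest index first) irrelevant. The only thing to be careful about is ensuring that $c_i$ truly remains zero at every iteration when $\qtid_i=0$, which follows from the fact that counters in Algorithm~\ref{alg:id} are only incremented and never decremented.
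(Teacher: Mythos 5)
Your proof is correct and is exactly the argument the paper compresses into its one-line proof ("follows by \eqref{eq:id:ijproperty} for $m=1$"): since $c_i$ stays at zero, any step that would increment $c_j$ from zero is blocked by the strictly smaller cost $\Delta_i(1)<\Delta_j(1)$. You have simply written out the details of the same approach.
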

\begin{IEEEproof}
The statement follows by~\eqref{eq:id:ijproperty} for $m=1$.
\end{IEEEproof}
\subsection{Optimality and Performance Bounds}
\begin{proposition}\label{prop:idopt}
Let $\vect$ be an ordered target distribution with finite or countably infinite support and let $M$ be a positive integer. Among all $M$-type distributions $\vecp$, $\vecp=\vecqtid$ minimizes $\kl(\vecp\Vert\vect)$.
\end{proposition}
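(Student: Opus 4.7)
The plan is to recast Problem~\eqref{prob:mod} as a selection problem over the countable collection of increments $\{\Delta_i(k)\}_{i,k\ge 1}$ and to show that Algorithm~\ref{alg:id} greedily selects the $M$ smallest of them. First I would use the telescoping identity already derived in the paper to write the objective as $F(\vecc)=\sum_{i}\sum_{k=1}^{c_i}\Delta_i(k)$, so that any feasible allocation $\vecc$ with $\sum_i c_i=M$ corresponds to choosing a multiset $\setm(\vecc)\subseteq\{(i,k):i\ge 1,\,k\ge 1\}$ of exactly $M$ index-level pairs with cost $\sum_{(i,k)\in\setm(\vecc)}\Delta_i(k)$. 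The feasibility constraint translates into a \emph{downward closure} condition: whenever $(i,k)\in\setm(\vecc)$ with $k\ge 2$, we must also have $(i,k-1)\in\setm(\vecc)$.

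Next, I would show that the multiset $\setm^\star$ consisting of the $M$ smallest values of $\{\Delta_i(k)\}$ is itself downward closed, hence feasible. Indeed, if $(i,k)\in\setm^\star$ and $k'<k$, then $\Delta_i(k')<\Delta_i(k)$ by the strict monotonicity~\eqref{eq:id:lkproperty}, so $(i,k')$ is also among the $M$ smallest. Since the cost of every feasible allocation is a sum of $M$ increments, the optimum of $F$ equals the sum over $\setm^\star$, giving a universal lower bound on $F(\vecc)$.

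Then I would prove by induction on $m\in\{0,1,\dots,M\}$ that after $m$ iterations Algorithm~\ref{alg:id} has selected exactly the $m$ smallest increments, with ties broken by preferring the smallest index (the rule built into both $\setm^\star$ and the algorithm). The induction step relies on the observation that the $(m+1)$-th smallest increment must lie on the current ``frontier'' $\{\Delta_i(c_i+1):i\ge 1\}$: otherwise some $\Delta_j(\ell)$ with $\ell\ge c_j+2$ would be the $(m+1)$-th smallest, yet $\Delta_j(c_j+1)<\Delta_j(\ell)$ by~\eqref{eq:id:lkproperty} is not yet selected and hence strictly smaller than the $(m+1)$-th smallest, a contradiction. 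Since Algorithm~\ref{alg:id} picks the minimum over this frontier, its selection at step $m+1$ coincides with the $(m+1)$-th element of $\setm^\star$, completing the induction and showing $F(\vecqtid\cdot M)$ achieves the lower bound.

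The main obstacle is verifying that the minimum in the algorithm is attained when $n=\infty$, so that the induction and tie-breaking argument are well defined. Here I would note that $\Delta_i(1)=\log(1/t_i)\to\infty$ as $i\to\infty$ since $t_i\to 0$, while for $k\ge 2$ the lower bound~\eqref{eq:id:incrementbound} forces $\Delta_i(k)\ge\log(k-1)$; consequently only finitely many frontier values lie below any given threshold, every minimum considered by the algorithm is attained, and only finitely many pairs $(i,k)$ have $\Delta_i(k)$ below the $M$-th order statistic. The deterministic rule of choosing the smallest index on ties makes $\vecqtid$ unique and consistent with the ordering used to build $\setm^\star$, so $\vecqtid$ minimizes $F$ and therefore solves Problem~\eqref{prob:KLDproblem}.
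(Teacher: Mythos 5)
Your proof is correct, but it takes a genuinely different route from the paper's. The paper argues by exchange (its Lemma~\ref{lem:bounded}): it posits an optimal allocation $\vecc^*$ and shows that each greedy increment can be reconciled with \emph{some} optimal allocation dominating the current pre-allocation, so that at termination the greedy allocation coincides with an optimal one. You instead reformulate Problem~\eqref{prob:mod} as selecting $M$ elements of the family $\{\Delta_i(k)\}$ subject to downward closure in $k$, observe that by the monotonicity \eqref{eq:id:lkproperty} the $M$ smallest increments automatically form a downward-closed (hence feasible) set, and conclude that their sum is simultaneously a universal lower bound and the value attained by the frontier selection of Algorithm~\ref{alg:id}. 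Your route buys three things: an explicit characterization of the optimal value as the sum of the $M$ smallest order statistics of $\{\Delta_i(k)\}$; Corollary~\ref{cor:id} for free, since after $m$ steps the algorithm has selected exactly the $m$ smallest increments; and a constructive existence argument for $n=\infty$ --- the paper's Lemma~\ref{lem:bounded} starts from ``let $\vecc^*$ be an optimal allocation'' and takes the argmin in \eqref{eq:defj} over infinitely many indices without discussing attainment, whereas you verify explicitly (via $\Delta_i(1)=\log(1/t_i)\to\infty$ and \eqref{eq:id:incrementbound}) that only finitely many increments lie below any threshold. The paper's exchange argument is in turn shorter to state and sidesteps the bookkeeping of ties between equal increments; in your argument ties affect only the identity of the selected set, not its total cost, so optimality is unaffected either way, but it is worth saying so explicitly.
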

\begin{IEEEproof}
 See Section~\ref{sec:proof:idopt}.
\end{IEEEproof}
The increment in the $m$-th iteration of Algorithm~\ref{alg:id} does not depend on $M$. This means that the algorithm not only calculates the optimal $M$-type quantization, but actually \emph{all} optimal $m$-type quantizations for $m=1,2,\dotsc,M$. We state this property as a corollary of Proposition~\ref{prop:idopt}.
\begin{corollary}\label{cor:id}
Let $\vecc$ be the pre-allocation calculated by Algorithm~\ref{alg:id} in the $m$-th iteration and define
\begin{align*}
\vecqtid_m:=\left(\frac{c_1}{m},\dotsc,\frac{c_n}{m}\right).
\end{align*}
Among all $m$-type distributions $\vecp$, $\vecp=\vecqtid_m$ minimizes $\kl(\vecp\Vert\vect)$.
\end{corollary}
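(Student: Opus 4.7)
The plan is to observe that Algorithm~\ref{alg:id} is genuinely incremental: the only way the parameter $M$ enters is as the termination condition of the \texttt{for} loop. The rule for choosing which coordinate to increment in step $m$ depends only on the current pre-allocation $\tilde{\vecc}$ and the increment function $\Delta_i(\cdot)$ defined in \eqref{eq:id:increment}, neither of which references $M$. Consequently, the sequence of partial allocations $\tilde{\vecc}^{(1)},\tilde{\vecc}^{(2)},\dotsc,\tilde{\vecc}^{(M)}$ produced when the algorithm is run with parameter $M$ is a prefix of the sequence produced when it is run with any larger parameter.

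Concretely, I would first point out that if we denote by $\vecc(M)$ the allocation returned by Algorithm~\ref{alg:id} when called with parameter $M$, then for any $1\leq m\leq M$, the pre-allocation after the $m$-th iteration of the loop in the run with parameter $M$ equals $\vecc(m)$. This is immediate by induction on $m$: both runs start from the zero vector, and since in every step both runs have the same current pre-allocation and apply the same deterministic tie-breaking rule to the same increment function, they choose the same index. Therefore
\[
\vecqtid_m = \left(\tfrac{c_1(m)}{m},\dotsc,\tfrac{c_n(m)}{m}\right),
\]
which is exactly the $M$-type approximation returned by Algorithm~\ref{alg:id} with input parameter $m$ in place of $M$.

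Having established this identification, the corollary follows directly from Proposition~\ref{prop:idopt} applied with $m$ in the role of $M$: that proposition guarantees that the output of Algorithm~\ref{alg:id} with parameter $m$ minimizes $\kl(\vecp\Vert\vect)$ among all $m$-type distributions, and this output is precisely $\vecqtid_m$.

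The only thing that could conceivably go wrong is the tie-breaking: if the ``smallest-index-first'' rule were replaced by an arbitrary rule, the prefix property could fail because different calls might break ties differently. With the deterministic rule specified in Algorithm~\ref{alg:id}, however, the induction above goes through without friction, so I do not expect any real obstacle. The entire argument is essentially the observation that Algorithm~\ref{alg:id} is an anytime algorithm whose intermediate states are themselves optimal solutions for smaller problems.
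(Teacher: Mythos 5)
Your proposal is correct and matches the paper's reasoning exactly: the paper also justifies the corollary by noting that the increment chosen in the $m$-th iteration does not depend on $M$, so the intermediate pre-allocation coincides with the output of a run with parameter $m$, and then invokes Proposition~\ref{prop:idopt}. Your write-up merely makes the induction and the role of the deterministic tie-breaking explicit, which the paper leaves implicit.
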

We next bound the informational divergence in terms of $M$. We start with the case when the support size of the target distribution is finite $(n<\infty)$. We have
\begin{align}
\kl(\vecqtid\Vert\vect)&\oleq{a}\kl(\vecqtvd\Vert\vect)\notag\\
&\oleq{b}\sum_{i\colon \qtvd_i>0} \qtvd_i\left(\frac{\qtvd_i}{t_i}-1\right)\notag\\
&\oleq{c}\sum_{i\colon \qtvd_i>0} \qtvd_i\left(\frac{t_i+\frac{1}{M}}{t_i}-1\right)\notag\\
&\leq\frac{1}{t_nM}\label{eq:id:simplebound}
\end{align}
where (a) follows by the optimality of $\vecqtid$, (b) by $\log(x)\leq x-1$, and (c) by \eqref{eq:quantBounds}. For $n=\infty$, we have $t_i\overset{i\to\infty}{\to}0$, so bound \eqref{eq:id:simplebound} becomes useless.  The next proposition tightens \eqref{eq:id:simplebound} for $n<\infty$ and $M\geq n$ and it provides a bound for $M<n$, which is important when the support of $\vect$ is infinite.
\begin{proposition}\label{prop:idbounds}
Let $\vect$ be an ordered target distribution and let $M$ be a positive integer.
\begin{enumerate}
\item If $\vect$ has finite support of cardinality $n\leq M$, then 
\begin{align}
\kl(\vecqtid\Vert\vect) < \log\left(1+\frac{n}{2 t_n M^2}\right).\label{eq:id:mgen}
\end{align}
\item If $\vect$ has finite or countably infinite support of cardinality $n>M$, then
\begin{align}
&\kl(\vecqtid\Vert\vect)< \frac{1}{2}\frac{r \log r}{r-1}\left(\frac{k}{2M}+2T_k\right)\label{eq:id:mlen}
\end{align}
with $r=\frac{1}{1-T_k}+\frac{e}{t_1}$.
\item For $n=\infty$, the support size $k$ of $\vecqtid$ satisfies $k\stackrel{M\to\infty}{\longrightarrow} \infty$ and $k/M\stackrel{M\to\infty}{\longrightarrow} 0$.
\end{enumerate}
\end{proposition}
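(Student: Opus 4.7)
\textbf{Plan for Proposition~\ref{prop:idbounds}.} All three parts rely on the optimality of $\vecqtid$ (Proposition~\ref{prop:idopt}) combined with a carefully chosen candidate $M$-type distribution built via Algorithm~\ref{alg:vd}, together with Lemma~\ref{lem:IDboundVD} to convert variational-distance bounds into divergence bounds. For Part~1, optimality gives $\kl(\vecqtid\Vert\vect)\le\kl(\vecqtvd\Vert\vect)$. Rather than using $\log x\le x-1$, which only reproduces~\eqref{eq:id:simplebound}, I would apply Jensen's inequality in the form
\[
\kl(\vecqtvd\Vert\vect)\le\log\!\sum_i\frac{(\qtvd_i)^2}{t_i}=\log\!\Bigl(1+\sum_i\frac{(\qtvd_i-t_i)^2}{t_i}\Bigr),
\]
where the identity uses $\sum_i\qtvd_i=\sum_i t_i=1$. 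The chi-squared sum is then bounded by $(1/t_n)\sum_i(\qtvd_i-t_i)^2\le(1/(t_nM))\lVert\vecqtvd-\vect\rVert_1\le n/(2t_nM^2)$ using the elementwise bound~\eqref{eq:quantBounds} and~\eqref{eq:vd:mgen}.

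For Part~2, let $k$ denote the support size of $\vecqtid$; since Algorithm~\ref{alg:id} performs exactly $M$ unit assignments, $k\le M$. The candidate is built as follows: let $\vect':=\vect_{1:k}/(1-T_k)$, a probability distribution on $\{1,\dots,k\}$, and let $\vecp^*$ be the output of Algorithm~\ref{alg:vd} applied to $\vect'$ (well defined since $k\le M$). Proposition~\ref{prop:idopt} gives $\kl(\vecqtid\Vert\vect)\le\kl(\vecp^*\Vert\vect)$, and Lemma~\ref{lem:IDboundVD} turns this into a variational bound. A direct calculation shows $\lVert\vect'-\vect\rVert_1=2T_k$, and Proposition~\ref{prop:vdbounds}(1) gives $\lVert\vecp^*-\vect'\rVert_1\le k/(2M)$, so the triangle inequality yields the variational factor $k/(2M)+2T_k$ appearing in~\eqref{eq:id:mlen}. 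For the ratio constant, $|p^*_i-t'_i|<1/M$ from~\eqref{eq:quantBounds} implies $p^*_i/t_i=(p^*_i/t'_i)/(1-T_k)<1/(1-T_k)+1/(Mt_i)\le 1/(1-T_k)+1/(Mt_k)$, and~\eqref{eq:id:ratioboundk} controls $1/(Mt_k)\le e/t_1$. Since $r\mapsto r\log r/(r-1)$ is monotonically increasing on $r>1$, I may substitute the upper bound $r=1/(1-T_k)+e/t_1$ in Lemma~\ref{lem:IDboundVD}.

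For Part~3, I treat the two claims separately. To show $k\to\infty$, assume for contradiction that $k\le K$ uniformly in $M$. Since $\vecqtid$ is supported in $\{1,\dots,K\}$, writing $\kl(\vecqtid\Vert\vect)=\kl(\vecqtid\Vert\vect_{1:K}/(1-T_K))+\log(1/(1-T_K))$ and using non-negativity of divergence gives the uniform lower bound $\kl(\vecqtid\Vert\vect)\ge\log(1/(1-T_K))>0$. On the other hand, applying the Part~2 construction with $k$ replaced by $K+1$ produces for each $M\ge K+1$ an $M$-type candidate $\vecp^{(M)}$; Corollary~\ref{cor:vd:asymptotic} implies $\vecp^{(M)}\to\vect_{1:K+1}/(1-T_{K+1})$ uniformly, hence $\kl(\vecp^{(M)}\Vert\vect)\to\log(1/(1-T_{K+1}))<\log(1/(1-T_K))$, and optimality then forces $\kl(\vecqtid\Vert\vect)<\log(1/(1-T_K))$ for $M$ large, a contradiction. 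For $k/M\to 0$, suppose $k\ge\epsilon M$ along some subsequence; by~\eqref{eq:id:ratioboundk}, $t_k\ge t_1/(eM)$, and since $\vect$ is non-increasing with $\lceil\epsilon M\rceil\le k$, I get $\lceil\epsilon M\rceil\cdot t_{\lceil\epsilon M\rceil}\ge\epsilon t_1/e>0$, which contradicts the classical fact that $i\,t_i\to 0$ for any summable non-increasing sequence. The main obstacle is Part~2: engineering the candidate so that the ratio constant lands exactly at $1/(1-T_k)+e/t_1$. A naive choice such as applying Algorithm~\ref{alg:vd} directly to the sub-probability $\vect_{1:k}$ would break the elementwise bound~\eqref{eq:quantBounds} whenever the rest mass exceeds $k$, so renormalizing by $1-T_k$ before quantizing is what ties the estimate to the Lemma~\ref{lem:ratiobound} bound on $\vecqtid$.
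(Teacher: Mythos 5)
Your Parts 1 and 2 follow the paper's proof essentially step for step: the Jensen/R\'enyi bound $\kl\le\log(1+\chi^2)$ combined with \eqref{eq:quantBounds} and \eqref{eq:vd:mgen} for $M\ge n$, and for $M<n$ the renormalized head $\vect_{1:k}/(1-T_k)$ quantized by Algorithm~1, with Lemma~\ref{lem:IDboundVD}, the triangle inequality giving $k/(2M)+2T_k$, and \eqref{eq:id:ratioboundk} controlling $r$; your explicit remarks that $k\le M$ and that $r\mapsto r\log r/(r-1)$ is increasing make tacit steps of the paper explicit, which is welcome. Part 3 is where you genuinely diverge, and both of your arguments are correct. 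For $k\to\infty$ the paper argues algorithmically via \eqref{eq:id:incrementbound}: the increment cost of piling more mass on the first $\ell-1$ indices eventually exceeds the entry cost $\log(1/t_\ell)$ of index $\ell$. You instead argue variationally: a support confined to $\{1,\dots,K\}$ forces the divergence floor $\log(1/(1-T_K))$, which a competitor supported on $K+1$ indices beats for large $M$ by Corollary~\ref{cor:vd:asymptotic}; this is arguably cleaner because it uses only optimality and never inspects the algorithm's dynamics (one cosmetic point: the negation of $k\to\infty$ is boundedness along a subsequence, not uniform boundedness, but your argument delivers the stronger conclusion "for each $K$, eventually $k>K$" anyway). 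For $k/M\to 0$ both you and the paper hinge on \eqref{eq:id:ratioboundk}; the paper derives a contradiction from divergence of $\sum t_1/(eM(k))$ under linear growth of $M(k)$, while you invoke Olivier's theorem ($i\,t_i\to 0$ for summable non-increasing sequences) on the subsequence $k\ge\epsilon M$ — a tighter and more standard closing step than the paper's somewhat informal series argument. No gaps.
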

\begin{IEEEproof}
See the Section~\ref{sec:proof:id}.
\end{IEEEproof}	
We briefly discuss the intuition behind the bounds in Proposition~\ref{prop:idbounds}.
The bound \eqref{eq:id:mgen} follows by evaluating the informational divergence of the variational distance optimal approximation $\vecqtvd$. To derive bound \eqref{eq:id:mlen}, we apply Lemma~\ref{lem:IDboundVD}. First, we determine the support size $k$ of $\vecqtid$. Then, we use Algorithm~\ref{alg:vd} to approximate the sub-probability distribution $\vect_{1:k}$. This lets us bound both the ratio $r$ and the variational distance in Lemma~\ref{lem:IDboundVD}. Note that \eqref{eq:id:mgen} and \eqref{eq:id:mlen} are not tight for finite $M$.

\subsection{Asymptotic Optimality}

For target distributions with finite support, bound \eqref{eq:id:mgen} guarantees that the informational divergence can be made arbitrarily small by choosing $M$ large enough. This result is also valid for target distributions with infinite support by using Statement 3) of Proposition~\ref{prop:idbounds} in \eqref{eq:id:mlen}. We summarize these observations as a corollary to Proposition~\ref{prop:idbounds}.
\begin{corollary}\label{cor:id:asymptotic}
Let $\vect$ be an ordered target distribution with finite or countably infinite support. For $M\to\infty$, the informational divergence of $\vecqtid$ and $\vect$ approaches zero.
\end{corollary}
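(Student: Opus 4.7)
The plan is to deduce the corollary directly from Proposition~\ref{prop:idbounds}, splitting on whether the support of $\vect$ is finite or countably infinite and checking in each case that the relevant upper bound on $\kl(\vecqtid\Vert\vect)$ vanishes as $M\to\infty$.

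In the finite-support case $n<\infty$, I would simply observe that for every $M\geq n$ the bound \eqref{eq:id:mgen} applies. Since $n$ and $t_n>0$ are fixed constants of $\vect$, the quantity $n/(2 t_n M^2)$ tends to $0$, hence $\log(1+n/(2 t_n M^2))\to 0$, and the informational divergence vanishes.

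The infinite-support case is the substantive one. I would start from \eqref{eq:id:mlen},
\begin{align}
\kl(\vecqtid\Vert\vect) < \frac{1}{2}\frac{r\log r}{r-1}\left(\frac{k}{2M}+2 T_k\right), \qquad r=\frac{1}{1-T_k}+\frac{e}{t_1},
\end{align}
in which $k=k(M)$ is the support size of $\vecqtid$. By Statement~3) of Proposition~\ref{prop:idbounds}, as $M\to\infty$ we have $k\to\infty$ and $k/M\to 0$. Because $\vect$ is a convergent probability distribution, $k\to\infty$ also forces $T_k=\sum_{i>k}t_i\to 0$. Hence the parenthetical factor $k/(2M)+2 T_k$ tends to zero, while $r$ converges to the finite constant $1+e/t_1>1$.

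The only point that needs care is that the prefactor $\tfrac{1}{2}\tfrac{r\log r}{r-1}$ stays bounded along this sequence; if it blew up, the conclusion would not follow. Since $r\geq 1+e/t_1>1$ uniformly in $M$ (so $r-1$ is bounded away from zero) and $r\leq 2+e/t_1$ once $T_k\leq 1/2$ (which holds for all sufficiently large $M$), the continuous map $r\mapsto r\log r/(r-1)$ is bounded on the range actually visited. Combined with the previous paragraph, the entire right-hand side tends to zero. I do not anticipate any substantive obstacle beyond this routine bookkeeping, as the heavy lifting already sits in Proposition~\ref{prop:idbounds}.
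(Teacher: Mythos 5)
Your proposal is correct and follows exactly the paper's route: apply bound \eqref{eq:id:mgen} for finite support and combine \eqref{eq:id:mlen} with Statement~3) of Proposition~\ref{prop:idbounds} for infinite support. The only addition is your explicit check that $T_k\to 0$ and that the prefactor $\tfrac{1}{2}\tfrac{r\log r}{r-1}$ remains bounded, which the paper leaves implicit but which is indeed the right bookkeeping to verify.
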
 
For $M\ge n$, the informational divergence approaches zero as $\mathcal{O}(1/M^2)$ by bound \eqref{eq:id:mgen}. For $M<n$, no such speed of convergence guarantee can be stated. We illustrate this by the following example.
\begin{example}
By Lemma~\ref{lem:firstM_id}, $\vecqtid$ assigns mass only to \cra{at most} the first (largest) $M$ indices. As in Example~\ref{ex:vd:yulesimon}, we consider the Yule-Simon distribution. By Pinsker's inequality \eqref{eq:pinsker} and Example~\ref{ex:vd:yulesimon}, the convergence of Algorithm~\ref{alg:id} is at best $\mathcal{O}(1/M^{2\rho})$. 
\end{example}

\section{Comparison of Informational Divergence and Variational Distance}
\label{sec:discussion}

\subsection{Elementwise Properties}

The variational distance optimal approximation $\vecqtvd$ guarantees a bounded per-entry approximation error $|t_i-\qtvd_i|$ by \eqref{eq:quantBounds}. Correspondingly, the informational divergence optimal approximation $\vecqtid$ guarantees a bounded per-entry ratio $\qtid_i/t_i$ by \eqref{eq:id:ratiobound}. The approximations $\vecqtvd$ and $\vecqtid$ can violate the per-entry bounds of the other. We illustrate this by the following two examples.
\begin{example}
Let $t_1=1/M$ and $t_2=\cdots=t_{n}=\frac{M-1}{(n-1)M}$, for $n>M$. Hence $\vecqtvd=(\frac{1}{M},\dots,\frac{1}{M})$, and
\begin{equation}
 \frac{\qtvd_2}{t_2} = \frac{(n-1)M}{M(M-1)} =\frac{n-1}{M-1}
\end{equation}
can be arbitrarily large. The approximation $\vecqtid$ guarantees that, by \eqref{eq:id:ratiobound}, we have
\begin{align}
\frac{\qtid_2}{t_2}\leq \frac{e}{t_1}=eM
\end{align}
independent of $n$.
\end{example}

\begin{example}
Let $\vect=(0.97,\,0.01,\,0.01,\,0.01)$ and $M=256$. It follows that $L=2$ and we obtain $\vecqtvd=(248,\,3,\,3,\,2)/256$ from Algorithm~\ref{alg:vd}. Algorithm~\ref{alg:id}, however, yields $\vecqtid=(247,\,3,\,3,\,3)/256$, where
\begin{equation}
 t_1-\qtid_1 = \frac{1.32}{M}
\end{equation}
violates~\eqref{eq:quantBounds}.

Let $\vect=(0.4,\,\varepsilon,\,\varepsilon,\dots,\,\varepsilon)^T$ and $M=2$. It follows that $L=2$ and we obtain $\vecqtvd=(1/2,\,1/2,\dots,\,0,\,0)^T$ from Algorithm~\ref{alg:vd}. However, if $n$ is sufficiently large such that $\varepsilon<0.1$, it can be shown that Algorithm~\ref{alg:id} yields $\vecqtid=(1,\,0,\dots,\,0,\,0)^T$, where
\begin{equation}
 \qtid_1-t_1 = \frac{1.2}{M}
\end{equation}
violates~\eqref{eq:quantBounds}.
\end{example}

\subsection{Support}

Suppose the target distribution $\vect$ has infinite support. By Statement 3) in Proposition~\ref{prop:vdbounds} and Statement 3) in Proposition~\ref{prop:idbounds}, the supports of the approximations $\vecqtvd$ and $\vecqtid$ both increase without bound and sublinearly with $M$. However, the following example shows that the support of $\vecqtvd$ can grow much faster than the support of $\vecqtid$. The reason is that assigning probability masses to indices with small target probabilities has a much higher cost in terms of informational divergence than in terms of variational distance. We illustrate this phenomenon by the following example.
\begin{example}
Consider the Yule-Simon distribution (see Example~\ref{ex:vd:yulesimon}) with $\rho=0.2$ and let $M$ take values from 1 to 10000 in steps of 10. The resulting support sizes of  $\vecqtvd$ and $\vecqtid$ are displayed in Fig.~\ref{fig:support}. The support size of $\vecqtvd$ is around twice the support size of $\vecqtid$. The considered Yule-Simon distribution has a heavy tail with $T_{10000} \approx 0.15$. In other words, the first 10000 entries of $\vect$ contain only 85\% of the total probability mass. 
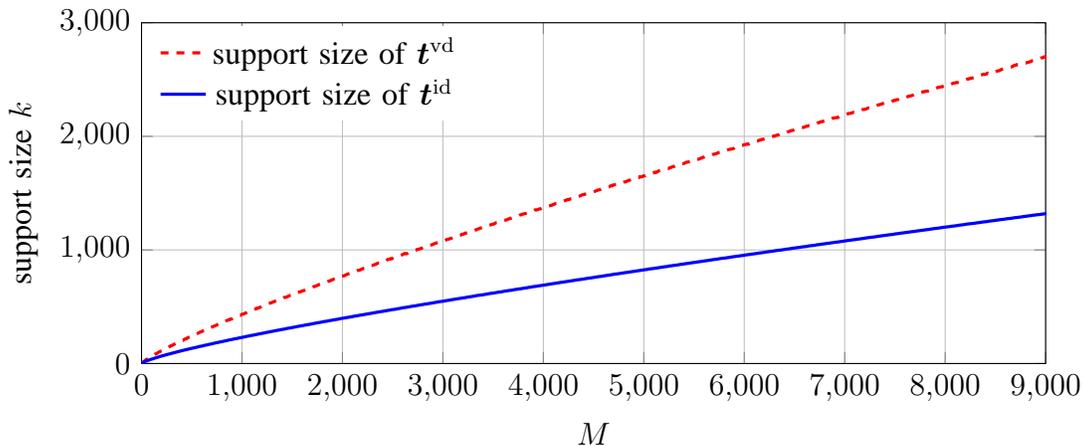
\begin{figure}[t]
 \centering
%\begin{pspicture}[showgrid=false](-0.5,-0.5)(8.5,3.6)
%	\footnotesize
%	\psaxes[Dx=2000,dx=1.6,Dy=1000,dy=0.8]{->}(0.0,0)(8,3.5)[$M$,-90][$k(M)$,0]
%	\rput[lb](1,2){\psframebox%
%	{\begin{tabular}{ll}
%	  	\psline[linewidth=0.5pt,linecolor=red](0.1,0.1)(0.7,0.1) &\hspace*{0.5cm} $\vecqtvd$\\%
%		\psline[linewidth=0.5pt,linestyle=dashed,linecolor=black](0.1,0.1)(0.7,0.1) &\hspace*{0.5cm} $\vecqtid$
%	 \end{tabular}}}
%	\readdata{\kvd}{kvd.dat}
%	\readdata{\kid}{kid.dat}
%	\psset{xunit=0.008mm,yunit=0.008mm}
%	\dataplot[plotstyle=line,linecolor=red,linewidth=0.5pt]{\kvd}
%	\dataplot[plotstyle=line,linecolor=black,linestyle=dashed,linewidth=0.5pt]{\kid}
%\end{pspicture}

\begin{tikzpicture}
\begin{axis}[
width=0.75\textwidth,
height=0.25\textheight,
xmin = 0,
ymin = 0,
xmax = 9000,
ymax = 3000,
xlabel={$M$},
ylabel={support size $k$},
grid=both,
xlabel near ticks,
ylabel near ticks,
legend columns=1,
legend style={at={(0.01,0.99)},anchor=north west,draw=none},
legend entries = {{support size of $\vecqtvd$},{support size of $\vecqtid$}},
]
\addplot[red,no markers,very thick,solid,dashed]
table{kvd.dat};
\addplot[blue,no markers,very thick,solid]
table{kid.dat};
\end{axis}

\end{tikzpicture}
 \caption{Support sizes of $\vecqtvd$ and $\vecqtid$ for the Yule-Simon distribution with $\rho=0.2$.}
 \label{fig:support}
\end{figure}
\end{example}

The next example shows that the support of $\vecqtvd$ is not always larger than the support of $\vecqtid$.

%\begin{example}
	%Let $\vect$ be such that $t_i=3/4^i$ and let $M=2\cdot 4^{l-1}$ for some integer $l$. For such a choice the first $l-1$ entries of $\vect$ are $M$-type, and exactly two unit masses need to be distributed among the remaining entries. It is easily seen that Algorithm~\ref{alg:vd} assigns $\qtvd_l=2/M$ while setting $\qtvd_{l+1}=\qtvd_{l+2}=\cdots=0$. The support size $k$ of $\vecqtvd$ is thus $l$. This approximation is also optimal in terms of the informational divergence, which evaluates to
%\begin{equation}
 %\kl(\vecqtvd\Vert\vect) = \frac{1}{4^{l-1}} \log \frac{4}{3}.
%\end{equation}
%The same informational divergence is obtained also for the approximation $\vecqtid$ satisfying $\qtid_i=t_i$ for $i<l$, $\qtid_l=\qtid_{l+1}=1/M$, and $\qtid_i=0$ for $i>l+1$. Since this approximation has support size $l+1$, there exists a subsequence for $M$ and a set of informational divergence optimal approximations for which the support size of the corresponding variational distance optimal approximations is strictly smaller.
%\end{example}

\begin{example}
In Example~\ref{ex:id2sols} we showed that for $\vect=(4/5,1/5)$ and $M=2$ both $\vecqt_1=(1,0)$ and $\vecqt_2=(1/2,1/2)$ are optimal in terms of the informational divergence. As it can be easily shown, $\vecqt_1$ is the unique approximation that is optimal in terms of the variational distance. We now modify the target distribution to $\vect=(4/5-\epsilon,1/5+\epsilon)$ with $0<\epsilon<1/20$. The vector $\vecqt_1$ remains the unique variational distance optimal approximation and $\vecqt_2$ is now the unique informational divergence optimal approximation. The support of $\vecqt_2$ is strictly larger than the support of $\vecqt_1$.
\end{example}

\subsection{Asymptotic Optimality}

Corollaries \ref{cor:vd:asymptotic} and \ref{cor:id:asymptotic} state that $\vecqtvd$ and $\vecqtid$ are asymptotically optimal w.r.t. variational distance and informational divergence, respectively. By Pinsker's inequality \eqref{eq:pinsker}, $\vecqtid$ is also asymptotically optimal w.r.t. the variational distance. In contrast, the variational distance optimal approximation $\vecqtvd$ is in general not asymptotically optimal w.r.t. the informational divergence. This is illustrated by the following example.
\begin{example}
Consider the distribution $\vect$ that is constructed from the geometric distribution $\tilde{t}_i=2^{-i}$ as follows: First, $t_1=\tilde{t}_1$. Then, the next probability mass $\tilde{t}_2$ is split into so many pieces that for $M=2$ the informational divergence equals $\cra{\log 2}$. For $M=2$, Algorithm~\ref{alg:vd} yields $\vecqtvd_2 = (\frac{1}{2},\frac{1}{2})$, where the first \cra{entry} is approximated perfectly. The informational divergence of $\vecqtvd_2$ and $\vect$ evaluates to
\begin{equation}
 \kl(\vecqtvd_2\Vert\vect) = \frac{1}{2}\log \frac{1}{2t_2} \stackrel{!}{=} \cra{\log 2}
\end{equation}
from which $t_2=1/8$ follows. Thus, $t_2=t_3=1/8$, which sums to $1/4$. Repeating the procedure for $M=8$, the first three indices are approximated without error, and the two remaining masses are placed on the following indices, such that $\vecqtvd_8 = (\frac{1}{2},\frac{1}{8},\frac{1}{8},\frac{1}{8},\frac{1}{8})$. To ensure that the informational divergence remains equal to $\cra{\log 2}$, one again must split the next probability mass $\tilde{t}_3$ into sufficiently many pieces. It follows that $t_4=\cdots=t_{19}=1/128$, which sum to $1/8$. Repeating this procedure yields $\vect$ satisfying
\begin{multline}
 t_j = 2^{1-2^i},\quad\\ \text{ if }
 \sum_{k=0}^{i-1} 2^{2^k-k-1} \le j \le \sum_{k=0}^{i} 2^{2^k-k-1} -1,\\ i\in\mathbb{N}.
\end{multline}
For this, the subsequence $\{M_i\}_{i\in\mathbb{N}}=\{2^{2^i-1}\}$ yields an informational divergence equal to $\cra{\log 2}$, while the variational distance is bounded by $2/2^i$, i.e., twice the remaining mass of the geometric distribution. Hence, by Corollary~\ref{cor:vd:asymptotic}, $\lVert\vecqtvd-\vect\rVert_1\overset{M\to\infty}\to 0$, while $\limsup_{M\to\infty}\kl(\vecqtvd\Vert\vect)=\cra{\log 2}$.
\end{example}

 \section{Proofs}
\label{sec:proofs}
%
%
%
%
% %
% %
% \subsection{Proof of Lemma~\ref{lem:IDboundVD}}
% \label{sec:proof:idvdbound}
% Define 
% \begin{align}
% \mathcal{A}:=\{i\colon p_i\geq t_i\},\quad P:=\sum_{i\in\mathcal{A}}p_i,\quad T:=\sum_{i\in\mathcal{A}}t_i. 
% \end{align}
% We express the variational distance as
% \begin{align}
% \lVert \vecp-\vect\rVert_1&=\sum_{i\in\mathcal{A}}(p_i-t_i)+\sum_{i\notin\mathcal{A}}(t_i-p_i)\nonumber\\
% &=P-T+(1-T)-(1-P)\nonumber\\
% &=2(P-T)=2\cdot\sum_{i\in\mathcal{A}}(p_i-t_i).\label{eq:vd2a}
% \end{align}
% Now consider
% \begin{align}
% \kl(\vecp\Vert\vect)
% &=\sum_{i\colon p_i>0}p_i\log\frac{p_i}{t_i}\notag\\
% &=\sum_{i\colon p_i>0}(p_i-t_i)\log\frac{p_i}{t_i}+\sum_{i\colon p_i>0}t_i\log\frac{p_i}{t_i}\notag\\
% &\oleq{a}\log r\sum_{i\colon p_i>0}(p_i-t_i)+\sum_{i\colon p_i>0}t_i\log\frac{p_i}{t_i}\notag\\
% &\leq\log r\sum_{i\colon p_i\geq t_i}(p_i-t_i)+\sum_{i\colon p_i>0}t_i\log\frac{p_i}{t_i}\notag\\
% &\oeq{b} \frac{1}{2}\log r\lVert \vecp-\vect\rVert_1+\sum_{i\colon p_i>0}t_i\log\frac{p_i}{t_i}\notag\\
% &\oleq{c} \frac{1}{2}\log r\lVert \vecp-\vect\rVert_1+\sum_{i\colon p_i>0}t_i\left(\frac{p_i}{t_i}-1\right)\notag\\
% &\leq \frac{1}{2}\log r\lVert \vecp-\vect\rVert_1+\sum_{i\colon p_i\geq t_i}(p_i-t_i)\notag\\
% &\oeq{d}\frac{1}{2}\log r\lVert \vecp-\vect\rVert_1+\frac{1}{2}\lVert \vecp-\vect\rVert_1\notag\\
% &=\frac{1}{2}\left(\log r+1\right)\lVert \vecp-\vect\rVert_1
% \end{align}
% where (a) follows by the definition of $r$, (b) and (d) follow by \eqref{eq:vd2a} and where (c) follows by $\log(x)\leq x-1$.
% \qed
% %
%
%
%
%
%
\subsection{Proof of Lemma~\ref{lem:vdopt_quantBounds}}~\label{proof:vd:optprop}
\crb{We prove that every optimal $\vecp$ satisfies~\eqref{eq:quantBounds} by contradiction: Suppose that $p_i\le t_i-\frac{1}{M}$ for some $i$. Since both $t_i$ and $p_i$ must sum to one, there must be a $j$ for which $p_j>t_j$. Define $\vecp^\circ$ by $p_i^\circ=p_i+\frac{1}{M}$, $p_j^\circ = p_j-\frac{1}{M}$, and $p_\ell^\circ=p_\ell$ for all $\ell\neq i,j$.
We calculate
\begin{align}
\lVert\vecp-\vect\rVert_1-\lVert\vecp^\circ-\vect\rVert_1
&=t_i-p_i-t_i+p_i^\circ+|p_j-t_j|-|p_j^\circ-t_j| \notag\\
&=\frac{1}{M}+|p_j-t_j|-|p_j-\frac{1}{M}-t_j|\\
&=\frac{1}{M}+|p_j-t_j|-\left||p_j-t_j|-\frac{1}{M}\right|\\
&>0.\label{eq:vdproof:lastline}
\end{align}
where~\eqref{eq:vdproof:lastline} follows because $p_j>t_j$. We conclude that an optimal algorithm cannot lead to $p_i\le t_i-\frac{1}{M}$. That $p_i\ge t_i+\frac{1}{M}$ is sub-optimal follows along the same lines.\qed
}

% We prove that every optimal $\vecp$ satisfies~\eqref{eq:quantBounds} by contradiction: Suppose that $|p_i-t_i|\ge\frac{1}{M}$ for some $i$. Thus, for some positive integer $k$, either $p_i=\frac{\lfloor Mt_i\rfloor-k}{M}$ or $p_i=\frac{\lceil Mt_i\rceil+k}{M}$. We treat only the first case, the second follows similarly. Note that
% \begin{align}
%  p_i=\frac{\lfloor Mt_i\rfloor-k}{M}
% \Rightarrow \exists &j\colon p_j = \frac{\lfloor Mt_j\rfloor+m}{M}\nonumber\\
% &\text{for some positive integer $m$}
% \end{align}
% because the entries of $\vecp$ must sum to one. Define $\vecp^\circ$ by $p_i^\circ=p_i+\frac{1}{M}$, $p_j^\circ = p_j-\frac{1}{M}$, and $p_\ell^\circ=p_\ell$ for all $\ell\neq i,j$. We calculate
% \begin{align}
% \lVert\vecp\!-\!\vect\rVert_1\!-\!\lVert\vecp^\circ\!-\!\vect\rVert_1
% &=t_i-p_i-t_i+p_i^\circ+p_j-t_j-|p_j^\circ-t_j| \notag\\
% &=\frac{1}{M}+p_j-t_j-|p_j^\circ-t_j|.\label{eq:vdproof:lastline}
% \end{align}
% For $m>1$ one obtains $p_j^\circ>t_j$ and~\eqref{eq:vdproof:lastline} evaluates to $\frac{2}{M}>0$. For $m=1$, continuing with \eqref{eq:vdproof:lastline} yields
% \begin{align}
% \lVert\vecp-\vect\rVert_1-\lVert\vecp^\circ-\vect\rVert_1
% &=\frac{1}{M}+p_j+p_j^\circ-2t_j\notag\\
% &=\frac{2\lfloor Mt_j\rfloor+2}{M}-2t_j\notag\\
% &>\frac{2(Mt_j-1)+2}{M}-2t_j=0.
% \end{align}
% We conclude that an optimal algorithm cannot violate~\eqref{eq:quantBounds}.\qed

\subsection{Proof of Lemma~\ref{lem:subprobability}}
\label{sec:proof:lem}
We claim that the two bounds in \eqref{eq:subbounds:cases} relate as 
\begin{align}
T_k\leq\frac{k}{2M}+\frac{MT_k^2}{2k}.\label{vd:proofs:sub0} 
\end{align}
This can be seen from
\begin{align}
\left(\frac{k}{2M}+\frac{MT_k^2}{2k}\right)-T_k&=\frac{M}{2k}\left(\frac{k^2}{M^2}-2\frac{k}{M}T_k+T_k^2\right)\notag\\
&=\frac{M}{2k}\left(\frac{k}{M}-T_k\right)^2\geq 0.
\end{align}
The general bound in~\eqref{eq:subbounds:loose} follows by loosening the right-hand side (left-hand side) of~\eqref{vd:proofs:sub0} if $T_k\le k/M$ (if $T_k\ge k/M$).

We next consider the two cases $T_k \geq k/M$ and $T_k \leq k/M$ separately.

\textbf{Case $T_k \geq k/M$:} We show that $\lVert\vecqtvd-\vect\rVert_1= T_k$ and the general bound follows by \eqref{vd:proofs:sub0}. We have
\begin{align}
\frac{k}{M}\leq T_k &= 1-\sum_{i=1}^k t_i=\sum_{i=1}^k (\qtvd_i - t_i).\label{vd:proofs:sub1}
\end{align}
\cra{In Algorithm~\ref{alg:vd}, the rest mass $L/M$ after the initialization step cannot be smaller than $T_k$. Thus
\begin{align}
\frac{L}{M}\geq T_k\geq \frac{k}{M}
\end{align}
which implies $L\geq k$. Thus, in the finalization step of Algorithm~\ref{alg:vd}, each of the entries $j=1,\dotsc,k$ will get assigned at least one more mass $1/M$, so
\begin{align}
\text{for\,each }j=1,\dotsc,k\colon (\qtvd_i-t_i)\geq 0.\label{vd:proofs:sub2}
\end{align}
%We first show by contradiction that $(\qtvd_i-t_i)\geq 0$ for all $i=1,\dotsc,k$. Suppose that $\qtvd_j - t_j<0$ for some $j$. In this case, inequality \eqref{vd:proofs:sub1} can hold only if $\qtvd_i-t_i\geq 1/M$ for some $i$, which is ruled out by~\eqref{eq:quantBounds}\todo{this is NOT clear, according to the PDF reviewer}. Thus, we have
%\begin{align}
%T_k\geq \frac{k}{M}\Rightarrow\qtvd_i-t_i\geq 0,\quad\text{for all }i=1,\dotsc,k.\label{vd:proofs:sub2}
%\end{align}
Altogether, we have
\begin{equation}
 \lVert\vecqtvd-\vect\rVert_1 = \sum_{i=1}^k |\qtvd_i-t_i| \oeq{a} \sum_{i=1}^k (\qtvd_i-t_i) \oeq{b} T_k\label{vd:proofs:sub3}
\end{equation}
where (a) follows by \eqref{vd:proofs:sub2} and where (b) follows by \eqref{vd:proofs:sub1}.
}

\textbf{Case $T_k\leq k/M$:} If $\qtvd_i-t_i\geq0$ for all $i=1,\dotsc,k$, then $\lVert\vecqtvd-\vect\rVert_1 = T_k$ by \eqref{vd:proofs:sub3} and \eqref{vd:proofs:sub0} implies that the general bound claimed by the lemma holds. It remains to show that the general bound also holds when
\begin{align}
\qtvd_j-t_j<0 \text{ for some }j\label{vd:proofs:sub4} 
\end{align}
\cra{
which implies
\begin{align}
\qtvd_i-t_i<\frac{1}{M},\quad i=1,\dotsc,k.
\end{align}
In particular, \eqref{vd:proofs:sub4} implies $L<k$ for the rest mass after the initialization step in Algorithm~\ref{alg:vd}, which implies further that in the finalization step, each entry $i=1,\dotsc,k$ gets assigned at most one additional mass $1/M$.} The error mass after the initialization step is
\begin{align}
 \sum_{i=1}^k e_i &= \sum_{i=1}^kt_i-\sum_{i=1}^k\frac{\lfloor Mt_i\rfloor}{M}\notag\\
&=\frac{L}{M}-T_k.
\end{align}
Now reorder the $k$ errors such that $\tilde{e}_i\ge\tilde{e}_{i+1}$. We bound the mean error from below and above by
\begin{equation}
 \frac{1}{L} \sum_{i=1}^L \tilde{e}_i \ge \frac{L}{Mk}-\frac{T_k}{k} \ge \frac{1}{k-L}\sum_{i=L+1}^k \tilde{e}_i. \label{eq:vdproof:inequality}
\end{equation}
Equality holds if $\tilde{e}_i=\frac{L}{Mk}-\frac{T_k}{k}$ for all $i=1,\dots,k$. After the update step in Algorithm~\ref{alg:vd}, the $L$ largest errors $\tilde{e}_i$ are replaced by the final errors $1/M-\tilde{e}_i$. The other errors remain unchanged. We bound 
\begin{align}
 \sum_{i=1}^k |\qtvd_i-t_i|
&= \sum_{i=1}^L \left(\frac{1}{M}-\tilde{e}_i\right) + \sum_{i=L+1}^k \tilde{e}_i\notag\\
&\oleq{a} \frac{L}{M} + \left(\frac{L}{Mk}-\frac{T_k}{k}\right)(k-2L)
\end{align}
where (a) follows by \eqref{eq:vdproof:inequality}. The maximum is achieved for $L=(k+MT_k)/2$, which yields
\begin{equation}
 \lVert\vecqtvd-\vect\rVert_1 \le \frac{k}{2M}+\frac{MT_k^2}{2k}.\label{vd:proofs:subfinal}
\end{equation}
\qed

\subsection{Proof of Proposition~\ref{prop:vdbounds}}
\label{sec:proof:vd}

\subsubsection{} 
The proof follows from Lemma~\ref{lem:subprobability} for $k=n$ and $T_k=T_n\equiv 0$.

\subsubsection{}
Let $k$ be the support size of $\vecqtvd$, and let $\vect_{1:k}$ be the sub-probability distribution obtained by taking the first $k$ indices of $\vect$. Then, we have
\begin{equation}
 \lVert\vecqtvd-\vect\rVert_1 = \lVert\vecqtvd-\vect_{1:k}\rVert_1 + T_k.
\end{equation}
If $k$ is the support size, then by Lemma~\ref{lem:firstM_vd} the first $k$ indices get $M$ masses. Since the algorithm satisfies~\eqref{eq:quantBounds}, we have
\begin{align}
 T_k = 1-\sum_{i=1}^k t_i
=\sum_{i=1}^k (\qtvd_i-t_i)
\le \frac{k}{M}.
\end{align}
Thus we can bound $\lVert\vecqtvd-\vect_{1:k}\rVert_1$ by Lemma~\ref{lem:subprobability} and get
\begin{align}
  \lVert\vecqtvd-\vect\rVert_1 
&\le \frac{k}{2M}+\frac{MT_k^2}{2k} + T_k\notag\\
&= \frac{k}{2M} \left( 1+\frac{2MT_k}{k} +\frac{M^2T_k^2}{k^2}\right)\notag\\
&= \frac{k}{2M}\left( 1+\frac{MT_k}{k}\right)^2.
\end{align}

\subsubsection{}
The support size $k$ of $\vecqtvd$ grows without bound with $M$ because for every $l$ there exists an $M$ such that $t_l>1/M$, hence this index gets probability mass already in the initialization step of Algorithm~\ref{alg:vd}.

We show that the support size $k\equiv k(M)$ grows sublinearly with $M$ by contradiction. Suppose there exists a $0<c\le 1$ such that
\begin{equation}
 \limsup_{M\to\infty} \frac{k(M)}{M} = c.
\end{equation}
Thus, for each $\epsilon>0$, there exists a sequence $\{M_i\}_{i\in\mathbb{N}}$, $M_1<M_2<M_3<\dotsb$, such that
\begin{equation}\label{eq:subseqbounds}
 (c-\epsilon)M_i < k(M_i) < (c+\epsilon)M_i,\quad i\in\mathbb{N}.
\end{equation}
Now choose $i<j\in\mathbb{N}$. Applying the algorithm for $M_i$ and $M_j$ increases the support size from $k(M_i)$ to $k(M_j)$. In total, the algorithm has $M_j$ masses to distribute, some of which are distributed to the first $k(M_i)$ indices. In particular, in the first step the algorithm assigns
\begin{equation}
 \sum_{l=1}^{k(M_i)} \lfloor M_j t_l\rfloor
\end{equation}
masses to the first $k(M_i)$ indices. The difference in support sizes is thus bounded from above by
\begin{align}
 k(M_j)&-k(M_i)
\le M_j - \sum_{l=1}^{k(M_i)} \lfloor M_j t_l\rfloor\notag\\
&< M_j - \sum_{l=1}^{\lfloor(c-\epsilon)M_i\rfloor} \lfloor M_j t_l\rfloor\notag\\
&= M_j \left( 1-\sum_{l=1}^{\lfloor(c-\epsilon)M_i\rfloor} \frac{\lfloor M_j t_l\rfloor}{M_j}\right)\notag\\
&= M_j \left(T_{\lfloor(c-\epsilon)M_i\rfloor}+\sum_{l=1}^{\lfloor(c-\epsilon)M_i\rfloor} \left(t_l-\frac{\lfloor M_j t_l\rfloor}{M_j}\right)\right)\notag\\
&< M_j \left( T_{\lfloor(c-\epsilon)M_i\rfloor}+\frac{(c-\epsilon)M_i}{M_j}\right).
\end{align}
Now choose $i$ large enough such that $T_{\lfloor(c-\epsilon)M_i\rfloor}< \epsilon$ and choose $j$ large enough such that $M_i/M_j<1/4$. We have
\begin{equation}
 \frac{k(M_j)-k(M_i)}{M_j} < \epsilon+\frac{c-\epsilon}{4} .\label{eq:proofs:subupper}
\end{equation}
A lower bound on the support size difference is obtained from~\eqref{eq:subseqbounds}:
\begin{equation}
  \frac{k(M_j)-k(M_i)}{M_j}
> (c-\epsilon)-(c+\epsilon)\frac{M_i}{M_j}
> \frac{3c}{4} - \frac{5\epsilon}{4}.\label{eq:proofs:sublower}
\end{equation}
Combining \eqref{eq:proofs:subupper} and \eqref{eq:proofs:sublower} yields an upper bound on $c$:
\begin{equation}
 \frac{3c}{4} - \frac{5\epsilon}{4} < \frac{c-\epsilon}{4} +\epsilon.
\end{equation}
After rearranging we have $c<4\epsilon$ for any $\epsilon>0$, and thus
\begin{equation}
 \limsup_{M\to\infty} \frac{k(M)}{M} = 0.
\end{equation}

\qed

\subsection{Proof of \eqref{eq:yulesimonbound}}\label{proof:yulesimonbound}
\cra{We make use of the following lower bound on the beta function~\cite[eq.~(2)]{Grenie_BetaBounds}
\begin{equation}
 B(x,y)\ge \frac{x^{x-1}y^{y-1}}{(x+y)^{x+y-1}}
\end{equation}
which in our case gives
\begin{align}
 M\cdot B(M,\rho+1) &\ge \frac{M^M (\rho+1)^\rho}{(M+\rho+1)^{M+\rho}}\notag\\
 &= \frac{M^M }{(M+\rho+1)^M}\frac{(\rho+1)^\rho}{(M+\rho+1)^{\rho}}\notag\\
 &= \frac{(\rho+1)^\rho}{(1+\frac{\rho+1}{M})^M} \frac{1}{(M+\rho+1)^{\rho}}\notag\\
 &\ge \frac{(\rho+1)^\rho}{e^{\rho+1}} \frac{1}{(M+\rho+1)^{\rho}}\label{eq:rhobound}
\end{align}
where~\eqref{eq:rhobound} follows because  $(1+\frac{\rho+1}{M})^M$ approaches $e^{\rho+1}$ from below. This shows the existence of the constant $K(\rho)$ in \eqref{eq:yulesimonbound}.
}

\subsection{Proof of Lemma~\ref{lem:ratiobound}}
\label{sec:proof:ratiobound}
\cra{The case $M=1$ (hence $k=1$) is trivial; we focus on $M\ge 2$.}
 Suppose that $\vecp$ is an $M$-type distribution (not necessarily optimal) and that $\vecp^\circ$ is such that $p_i^\circ=p_i+\frac{1}{M}\le 1$, $p_j^\circ=p_j-\frac{1}{M}\ge 0$ and $p_\ell=p^\circ_\ell$ for all $\ell\neq i,j$. We now show that $\kl(\vecp\Vert\vect) > \kl(\vecp^\circ\Vert\vect)$ holds if $\vecp$ violates the statement of Lemma~\ref{lem:ratiobound}, i.e., that $\vecp$ is not optimal is not optimal in this case. To this end, notice that

 \begin{align*}
  \kl(\vecp\Vert\vect)-\kl(\vecp^\circ\Vert\vect)
  &= p_i \log \frac{p_i}{t_i} + p_j \log \frac{p_j}{t_j} - \left(p_i+\frac{1}{M}\right) \log \frac{p_i+\frac{1}{M}}{t_i} - \left(p_j-\frac{1}{M}\right) \log \frac{p_j-\frac{1}{M}}{t_j}\\
  &\stackrel{(a)}{=} p_j \log \frac{p_j}{t_j}- \left(p_j-\frac{1}{M}\right) \log \frac{p_j-\frac{1}{M}}{t_j} - \frac{1}{M}\left(\Delta_i(Mp_i+1)-\log M\right)\\
  &\stackrel{(b)}{>} \frac{1}{M}\log \frac{p_j}{t_j} + \underbrace{\left(p_j-\frac{1}{M}\right)\log\frac{p_j}{p_j-\frac{1}{M}}}_{>0}- \frac{1}{M}\left(\Delta_i(M)-\log M\right)\\
  &> \frac{1}{M}\log\frac{p_j}{t_j}+ \underbrace{\frac{M-1}{M}\log\frac{M-1}{M}}_{\ge -\frac{1}{M}} + \frac{1}{M}\log t_i\\
  &\ge \frac{1}{M}\log\frac{p_j}{t_j} - \frac{1}{M}\log \frac{e}{t_i}
 \end{align*}
 where $(a)$ is due to~\eqref{eq:id:increment} and $(b)$ follows by \eqref{eq:id:lkproperty}. Hence, if
 \begin{equation}\label{eq:proof:ratiobound}
  \frac{p_j}{t_j} \ge \frac{e}{t_i}
 \end{equation}
 for any pair of indices $i$ and $j$, then above difference of informational divergences is positive as well. Thus, an optimal $\vecp$ may not fulfill~\eqref{eq:proof:ratiobound} for any such pair of indices. The best bound is obtained for $i=1$, hence Lemma~\ref{lem:ratiobound} follows.
%  
%  Recall that $p_i=\frac{c_i}{M}$ and suppose entry $i$ was assigned the $c_i$-th unit mass in some iteration when entry $1$ has pre-allocation $\tilde{c}_1$. Then we have
% \begin{align}
% \Delta_i(c_i)\oleq{a}\Delta_1(\tilde{c}_1+1)\oleq{b}\Delta_1(M)
% \end{align}
% where (a) follows by the update rule of Algorithm~\ref{alg:id} and where (b) follows by \eqref{eq:id:lkproperty}.
% We bound $\Delta_1(M)$ from above:
% \begin{align}
% \Delta_1(M)&=M\log\frac{M}{t_1}-(M-1)\log\frac{M-1}{t_1}\notag\\
% &=\log\frac{M}{t_1}+(M-1)\log\frac{M}{M-1}\notag\\
% &=\log\frac{M}{t_1}+(M-1)\log\left(1+\frac{1}{M-1}\right)\notag\\
% &\cra{\ole{a}} \log\frac{M}{t_1}+1\notag\\
% &=\log\frac{Me}{t_1}
% \end{align}
% where (a) follows by $\log(1+x)\leq x$ \cra{for $x>0$ and from $M\ge 2$}. We next bound $\Delta_i(c_i)$ from below:
% \begin{align}
% \Delta_i(c_i)&=c_i\log\frac{c_i}{t_i}-(c_i-1)\log\frac{c_i-1}{t_i}\notag\\
% &=\log\frac{c_i}{t_i}+(c_i-1)\log\frac{c_i}{c_i-1}\notag\\
% &\cra{\ge} \log\frac{c_i}{t_i}.
% \end{align}
% Thus, we have
% \begin{align}
% \log\frac{c_i}{t_i}< \log\frac{Me}{t_1}\Rightarrow \frac{c_i}{Mt_i}=\frac{\qtid_i}{t_i}< \frac{e}{t_1}.
% \end{align}
The result for index $k$ results from $p_k\ge 1/M$.\qed

\subsection{Proof of Proposition~\ref{prop:idopt}}
\label{sec:proof:idopt}

To prove optimality, we need the following lemma.
\begin{lemma}\label{lem:bounded}
Let $\vecc^*$ be an optimal allocation. Let $\vecc$ be a pre-allocation with $\sum_ic_i<M$ and $c_i\leq c^*_i$ for $i=1,\dotsc,n$. Define
\begin{align}
j=\argmin_i\Delta_i(c_i+1).\label{eq:defj}
\end{align}
Then there exists an optimal allocation $\tilde{\vecc}$ with
\begin{align}
c_j+1&\leq \tilde{c}_j\label{eq:lem2:statement}\\
c_i&\leq\tilde{c}_i,\quad i=1,\dotsc,n.
\end{align}
\end{lemma}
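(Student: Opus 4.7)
\medskip

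The plan is to proceed by a case split and, in the nontrivial case, a single swap argument based on the monotonicity of the increment function $\Delta_i(\cdot)$.

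First, I would handle the easy case: if $c_j^*\geq c_j+1$, then $\tilde{\vecc}=\vecc^*$ satisfies both required inequalities, and the lemma holds. So assume $c_j^*=c_j$ (since $c_j\leq c_j^*$ by hypothesis). Because $\sum_i c_i<M=\sum_i c_i^*$ and $c_i\leq c_i^*$ for every $i$, there must exist some index $\ell\neq j$ with $c_\ell^*\geq c_\ell+1$. I would then construct $\tilde{\vecc}$ from $\vecc^*$ by the swap
\begin{align*}
\tilde{c}_j=c_j^*+1,\qquad \tilde{c}_\ell=c_\ell^*-1,\qquad \tilde{c}_i=c_i^*\text{ for }i\notin\{j,\ell\}.
\end{align*}
Clearly $\sum_i\tilde{c}_i=M$ and $\tilde{c}_i\in\{0,\ldots,M\}$, and the domination $c_i\leq\tilde{c}_i$ holds for all $i$: it is inherited from $\vecc^*$ outside $\{j,\ell\}$, holds for $j$ by construction, and holds for $\ell$ since $\tilde{c}_\ell=c_\ell^*-1\geq c_\ell$.

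The key step is to show that $\tilde{\vecc}$ is optimal. Writing the cost of an allocation as the telescoping sum $\sum_i\sum_{k=1}^{c_i}\Delta_i(k)$, the difference between the cost of $\tilde{\vecc}$ and that of $\vecc^*$ equals $\Delta_j(c_j^*+1)-\Delta_\ell(c_\ell^*)$. Using $c_j^*=c_j$ and the definition of $j$ in \eqref{eq:defj}, I obtain
\begin{align*}
\Delta_j(c_j^*+1)=\Delta_j(c_j+1)\leq \Delta_\ell(c_\ell+1).
\end{align*}
Since $c_\ell^*\geq c_\ell+1$, strict monotonicity of $\Delta_\ell$ from \eqref{eq:id:lkproperty} yields $\Delta_\ell(c_\ell+1)\leq \Delta_\ell(c_\ell^*)$. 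Chaining these two inequalities gives $\Delta_j(c_j^*+1)\leq \Delta_\ell(c_\ell^*)$, so the cost of $\tilde{\vecc}$ does not exceed that of $\vecc^*$. Optimality of $\vecc^*$ then forces equality, making $\tilde{\vecc}$ optimal as well.

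I expect no real obstacle: the entire argument rests on the existence of the compensating index $\ell$ (which is a simple counting argument) and on the two monotonicity properties of $\Delta$ already established in Lemma~\ref{lem:ifproperties}. The only thing to be careful about is making sure the swap does not violate the domination $c_i\leq \tilde{c}_i$ at index $\ell$, which is exactly why we decrement only if $c_\ell^*$ is strictly greater than $c_\ell$.
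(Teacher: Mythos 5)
Your proposal is correct and follows essentially the same route as the paper's proof: the same case split on whether $c_j^*\geq c_j+1$, the same compensating index $\ell$, the same single-swap construction, and the same inequality chain $\Delta_j(c_j^*+1)=\Delta_j(c_j+1)\leq\Delta_\ell(c_\ell+1)\leq\Delta_\ell(c_\ell^*)$. The only (cosmetic) difference is that the paper phrases the strict-inequality case as a contradiction showing $\tilde{\vecc}=\vecc^*$ already works, whereas you always perform the swap and let optimality of $\vecc^*$ force equality of costs; both are valid.
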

\begin{proof}
Suppose we have
\begin{align}
c_j+1>c^*_j.\label{eq:lem2:supp}
\end{align}
Since $c_j\leq c^*_j$ by assumption, \eqref{eq:lem2:supp} implies
\begin{align}
c_j+1=c^*_j+1.\label{eq:lem2:prop1}
\end{align}
Since $\sum_ic_i<M$ and $\sum_ic^*_i=M$, there must be at least one $\ell\neq j$ with
\begin{align}
c^*_\ell\geq c_\ell+1.\label{eq:lem2:prop2}
\end{align}
By decreasing $c^*_\ell$ by one and increasing $c^*_j$ by one, the change of the objective function is $\Delta_j(c^*_j+1)-\Delta_\ell(c^*_\ell)$. We bound this change as follows:
\begin{align}
\Delta_j(c^*_j+1)-\Delta_\ell(c^*_\ell)&\overset{\text{(a)}}{\leq} \Delta_j(c^*_j+1)-\Delta_\ell(c_\ell+1)\label{eq:lem1:neq1}\\
&\overset{\text{(b)}}{=} \Delta_j(c_j+1)-\Delta_\ell(c_\ell+1)\notag\\
&\overset{\text{(c)}}{\leq} \label{eq:lem1:neq2}0
\end{align}
where (a) follows by \eqref{eq:lem2:prop2} and Lemma~\ref{lem:ifproperties}, (b) follows by \eqref{eq:lem2:prop1}, and (c) follows by the definition of $j$ in \eqref{eq:defj}.

We must consider two cases. First, suppose we have strict inequality in either \eqref{eq:lem1:neq1} or \eqref{eq:lem1:neq2}. Then the objective function is decreased, which contradicts the assumption that $\vecc^*$ is optimal. Thus, the supposition \eqref{eq:lem2:supp} is false and the statements of the lemma hold for $\tilde{\vecc}=\vecc^*$. Second, suppose we have equality both in \eqref{eq:lem1:neq1} and \eqref{eq:lem1:neq2}. In this case, define the allocation
\begin{align}
\tilde{c}_\ell = c^*_\ell-1,\quad\, \tilde{c}_j = c^*_j + 1,\quad\, \tilde{c}_i = c^*_i\text{ for } i\neq j,\ell.
\end{align}
Equality in \eqref{eq:lem1:neq1}--\eqref{eq:lem1:neq2} implies optimality of $\tilde{\vecc}$. By \eqref{eq:lem2:prop1} and \eqref{eq:lem2:prop2}, we can verify that $\tilde{\vecc}$ fulfills the statements of the lemma. This concludes the proof of Lemma~\ref{lem:bounded}.
\end{proof}
We are now ready to prove Proposition~\ref{prop:idopt}. By Lemma~\ref{lem:bounded}, there is an optimal allocation $\tilde{\vecc}$ such that in each iteration of Algorithm~\ref{alg:id} we have 
\begin{align}
c_i\leq \tilde{c}_i,\qquad i=1,\dotsc,n.\label{eq:proof1}
\end{align}
After Algorithm~\ref{alg:id} terminates, we have
\begin{align}
M=\sum_i c_i \leq \sum_i \tilde{c}_i = M.\label{eq:proof2}
\end{align}
Statements \eqref{eq:proof1} and \eqref{eq:proof2} can be true simultaneously only if $c_i= \tilde{c}_i$ for all $i=1,\dotsc,n$. Consequently, the constructed allocation $\vecc$ is optimal.\qed

\subsection{Proof of Proposition~\ref{prop:idbounds}}
\label{sec:proof:id}

\subsubsection{Case $M\ge n$}
By Proposition~\ref{prop:idopt}, $\vecqtid$ is optimal w.r.t. the informational divergence and
\begin{equation}
 \kl(\vecqtid\Vert\vect) \le \kl(\vecqtvd\Vert\vect).
\end{equation}
Moreover,
\begin{align}
 \kl(\vecqtvd\Vert\vect) &= \sum_{i=1}^n\qtvd_i \log\frac{\qtvd_i}{t_i}\notag\\
 &\stackrel{(a)}{\le} \log\left( \sum_{i=1}^n\frac{(\qtvd_i)^2}{t_i}\right)\notag\\
 &= \log\left(1+ \sum_{i=1}^n \frac{(\qtvd_i-t_i)^2}{t_i}\right) \label{eq:renyi}
\end{align}
where $(a)$ is Jensen's inequality (see also the proof of~\cite[Thm.~3]{Sason_ReversePinsker}) and where the sum inside the logarithm is Pearson's $\chi^2$-distance $\chi^2(\vecqtvd\Vert\vect)$. Note that~\eqref{eq:renyi} equals $\kl_2(\vecqtvd\Vert\vect)$, the R\'{e}nyi divergence of second order. The inequality in $(a)$ is then a direct consequence of the fact that R\'{e}nyi divergence is non-decreasing in the order~\cite[Thm.~3]{vanErven_KLD}.

We now bound~\eqref{eq:renyi} by
\begin{align}
\sum_{i=1}^n\frac{(\qtvd_i-t_i)^2}{t_i}
&\cra{\le}\frac{1}{t_n}\sum_{i=1}^n(\qtvd_i-t_i)^2\notag\\
&= \frac{1}{t_n} \sum_{i=1}^n |\qtvd_{i}-t_i|\underbrace{|\qtvd_{i}-t_i|}_{< \frac{1}{M}\text{ by }\eqref{eq:quantBounds}}\notag\\
&< \frac{1}{t_n M} \lVert\vecqtvd-\vect\rVert_1\notag\\
&\oleq{a}\frac{n}{2t_n M^2}
\end{align}
where (a) follows by Statement 1) in Proposition~\ref{prop:vdbounds}.

% We bound 
% \begin{align}
% \kl(\vecqtvd\Vert\vect)&=\sum_{i=1}^n\qtvd_i\log\frac{\qtvd_i}{t_i}\notag\\
% &\oleq{a}\sum_{i=1}^n\qtvd_i\left(\frac{\qtvd_i}{t_i}-1\right)\notag\\
% &=\sum_{i=1}^n\frac{(\qtvd_i)^2-\qtvd_it_i}{t_i}-\underbrace{\sum_{i=1}^n\frac{\qtvd_it_i}{t_i}}_{=1}+\underbrace{\sum_{i=1}^n\frac{t_i^2}{t_i}}_{=1}\notag\\
% &=\sum_{i=1}^n\frac{(\qtvd_i-t_i)^2}{t_i}\label{eq:idbounds:pearson}
% \end{align}
% where (a) follows by $\log x\leq x-1$. Note that the term in \eqref{eq:idbounds:pearson} is also known as Pearson's $\chi^2$-distance. We continue with \eqref{eq:idbounds:pearson}:

\subsubsection{Case $M<n$} 
Let $k$ be the support size of $\vecqtid$. Define the auxiliary distribution $\tilde{\vect}:=\vect_{1:k}/(1-T_k)$. Because of the normalization by $1-T_k$, the entries of $\tilde{\vect}$ sum to one and $\tilde{\vect}$ is a distribution. Denote by $\vecqtvdt$ the approximation that results from applying Algorithm~\ref{alg:vd} to $\tilde{\vect}$. We have

\begin{align}
\kl(\vecqtid\Vert\vect)\leq\kl(\vecqtvdt\Vert\vect)\oleq{a}\frac{1}{2}\frac{r\log r}{r-1}\lVert \vecqtvdt-\vect\rVert_1\nonumber\\
\text{with }r=\max_{i\leq k}\frac{\qtvdt_i}{t_i}\label{eq:proof:mlenbound}
\end{align}

where (a) follows by Lemma~\ref{lem:IDboundVD}. It remains to bound the ratio $r$ and the variational distance $\lVert \vecqtvdt-\vect\rVert_1$. 

\emph{Bounding $r$:}
By \eqref{eq:quantBounds}, we have
\begin{align}
\qtvdt_i < \tilde{t}_i+\frac{1}{M}=\frac{t_i}{1-T_k}+\frac{1}{M}.
\end{align}
Thus, for each $i\leq k$, we have
\begin{align}
\frac{\qtvdt_i}{t_i} <\frac{1}{1-T_k}+\frac{1}{t_iM}\leq\frac{1}{1-T_k}+\frac{1}{t_kM}
\end{align}
which implies
\begin{align}
r &< \frac{1}{1-T_k}+\frac{1}{t_kM}\notag\\
&\oleq{a}\frac{1}{1-T_k}+\frac{e}{t_1}\label{eq:proof:rbound}
\end{align}
where (a) follows by \eqref{eq:id:ratioboundk} in Lemma~\ref{lem:ratiobound}.
	
\emph{Bounding $\lVert \vecqtvdt-\vect\rVert_1$:}
We bound
\begin{align}
\lVert \vecqtvdt-\vect\rVert_1&=\lVert \vecqtvdt-\vect_{1:k}\rVert_1+T_k\notag\\
&=\lVert \vecqtvdt-\tilde{\vect}(1-T_k)\rVert_1+T_k\notag\\
&\oleq{a}\lVert \vecqtvdt-\tilde{\vect}\rVert_1+\lVert\tilde{\vect}T_k\rVert_1+T_k\notag\\
&=\lVert \vecqtvdt-\tilde{\vect}\rVert_1+2T_k\notag\\
&\leq\frac{k}{2M}+2T_k\label{eq:proof:vdbound}
\end{align}
where (a) follows by the triangle inequality. Using \eqref{eq:proof:rbound} and \eqref{eq:proof:vdbound} in \eqref{eq:proof:mlenbound} completes the proof.

\subsubsection{}
The support $k$ grows without bound because the increment functions $\Delta_i$ grow without bound by \eqref{eq:id:incrementbound}, i.e., for every positive integer $\ell$ there exists an $M$ large enough such that, for all $i=1,\dots,\ell-1$,
\begin{equation}
 \log \frac{1}{t_\ell} < \Delta_i(c_i+1)
\end{equation}
where the sum over all $c_i$ is less than $M$. In other words, after assigning a specific number of masses to indices 1 to $\ell-1$, assigning a mass to index $\ell$ must have lower cost than assigning additional masses to the first $\ell-1$ indices.

The result $k(M)/M\overset{M\to\infty}{\to}0$ can be seen as follows. Increasing $M$ by one increases the support size $k$ at most by one. This is a consequence of the update rule in Algorithm~\ref{alg:id}. Thus, the sequence $k\equiv k(M)$ contains each integer $1,2,3,\dotsc$ at least once and we can define a sequence $M(k)$, $k=1,2,3,\dotsc$. Note that some integers may not occur in the sequence $M(k)$. By \eqref{eq:id:ratioboundk} in Lemma~\ref{lem:ratiobound}, we can bound the $k$-th probability $t_k$ by
\begin{align}
t_k>\frac{t_1}{eM(k)}
\end{align}
and we have
\begin{align}
1=\sum_{k=1}^\infty t_k > \sum_{k=1}^\infty\frac{t_1}{eM(k)}.
\end{align}
If $M(k)$ grows only linearly with $k$, then the sum on the right-hand side diverges, which contradicts that the probabilities need to sum to one. Thus, $M(k)$ grows super-linearly with $k$ and equivalently, $k(M)$ grows sub-linearly with $M$.\qed

\section{Acknowledgment}
The authors thank I. Sason for fruitful discussions (in particular for suggesting Lemma~\ref{lem:IDboundVD}) and R. A. Amjad for pointing out Corollary~\ref{cor:id}. %\crr{The authors are indebted to the anonymous reviewers for helpful suggestions and ideas for shortened proofs.} 
The authors are grateful to G. Kramer for helpful comments on drafts.
\bibliographystyle{IEEEtran}
\bibliography{IEEEabrv,confs-jrnls,references}

\listoftodos

\end{document}